\tikzset{vertex/.style={ draw , circle , fill , inner sep=0em , minimum size=0.3em}}
\tikzset{empty/.style={inner sep=0em, outer sep=0em, minimum size=0em}}
\newcommand{\Iff}{\textbf{if\textcompwordmark f} }
\newcommand{\tmem}[1]{{\em #1\/}}
\newcommand{\rbr}[1]{\left(#1\right)}
\newcommand{\abr}[1]{\left\langle#1\right\rangle}
\newcommand{\floorbr}[1]{\left\lfloor #1\right\rfloor}
\newcommand{\ceilbr}[1]{\left\lceil #1\right\rceil}
\newcommand{\fbr}[1]{\left\{#1\right\}}
\newcommand{\st}{\mathbf{st}}
\let\le\leqslant
\let\ge\geqslant
\let\phi\varphi
\newcommand{\eps}{\varepsilon}
\theoremstyle{plain}
\newtheorem{lemma}{Lemma}
\newtheorem{proposition}{Proposition}
\newtheorem*{problem}{Open problem}
\theoremstyle{definition}
\newtheorem*{definition}{Definition}
\theoremstyle{remark}
\newtheorem{remark}{Remark}
\newtheorem*{remark*}{Remark}
\newcommand{\ZZ}{\mathbb{Z}}
\newcommand{\NN}{\mathbb{N}}
\newcommand{\abs}[1]{\left\lvert #1 \right\rvert}
\newcommand{\norm}[1]{\left\lVert#1\right\rVert}
\newcommand{\HX}{H_X}
\newcommand{\HZ}{H_Z}
\newcommand{\CX}{\mathcal{C}_X}
\newcommand{\CZ}{\mathcal{C}_Z}
\newcommand{\dX}{d_X}
\newcommand{\dZ}{d_Z}
\newcommand{\wt}{\mathbf{wt}}
\newcommand{\T}{*}
\newcommand{\id}{\mathrm{id}}
\newcommand{\zm}{0}
\newcommand{\cA}{\mathcal{A}}
\newcommand{\cB}{\mathcal{B}}
\newcommand{\cC}{\mathcal{C}}
\newcommand{\cG}{\mathcal{G}}
\newcommand{\cQ}{\mathcal{Q}}
\newcommand{\cL}{\mathcal{L}}
\newcommand{\cF}{\mathcal{F}}
\newcommand{\sC}{{\mathbf{C}}}
\newcommand{\cay}{\mathrm{Cay}}
\newcommand{\Z}{\mathbb{Z}}
\newcommand{\F}{\mathbb{F}}
\DeclareMathOperator{\rk}{rk}
\DeclareMathOperator{\im}{im}
\DeclareMathOperator{\supp}{supp}
\newcommand{\Gr}{{\sf Gr}}
\newcommand{\PP}{\mathsf{P}}
\newcommand{\qbin}[2]{\binom{#1}{#2}_{\! q}}
\begin{document}

\title{Two-sided Robustly Testable Codes}
\author{Gleb~Kalachev and Pavel~Panteleev\thanks{Gleb~Kalachev and Pavel~Panteleev are with the Faculty of Mechanics and Mathematics, Moscow State University, Moscow, Russia.%
}}

\setcounter{page}{1}
\maketitle

\begin{abstract}
  We show that the tensor product of two random linear codes is robustly testable with high probability. This implies that one can obtain pairs of linear codes such that their product and the product of their dual codes are simultaneously robustly testable. Such two-sided robustly testable codes (with a much weaker form of robustness) were the~key ingredient in the recent constructions of asymptotically good quantum LDPC codes, which ensured their linear minimum distance. We hope that the existence of such codes with a stronger form of robustness, shown here, can be used to simplify the proofs and provide better distance bounds in these constructions. We also give new very simple examples of non-robustly testable codes. We show that if the parity-checks of two codes are mutually orthogonal, then their product is not robustly testable. In particular, this implies that the product of a~code with its dual can never be robustly testable. We also study a~property of a~collection of linear codes called product-expansion, which can be viewed as a~coboundary expansion of the cochain complex naturally associated with the product of these codes. We show that this property is related with the robust testability and the agreement testability of the products of codes. 
\end{abstract}

\section{Introduction}

A locally testable code~\cite{LTC:2006} is an error correcting code $\cC\subseteq \F_q^n$ equipped with an~efficient non-deterministic test which reads a~very small number of symbols from $x\in\F_q^n$ and allows to estimate the  distance from $x$ to $\cC$. Formally, we say that a~linear code $\cC\subseteq \F_q^n$ is (\emph{strongly}) \emph{locally testable} with \emph{soundness} $s$ and \emph{locality} $w$  if it has a~parity-check matrix $H$ with rows of weight at most $w$ such that for any vector $x \in \mathbb{F}_q^n$ we have
\begin{equation}\label{eqn:LTC}
   s \cdot \delta(x, \mathcal{C}) \le  \norm{H x}, 
\end{equation}
where $\delta(x, \mathcal{C}) := \min_{c\in\cC} \delta(x,c)$, and we denote respectively by $\delta(\cdot,\cdot)$ and $\norm{\cdot}$ the Hamming distance and the Hamming weight, both normalized by the length of the corresponding vectors. A~code~$\cC$ satisfying the above definition can be equipped with the following proximity test. Given $x\in \F_q^n$ we can pick uniformly at random a row from $H$ and check whether $x$ satisfies the corresponding linear equation. From~condition~(\ref{eqn:LTC}) it follows that the \emph{smaller} the rejection probability $\mathrm{rej}_H(x) = \norm{Hx}$ of this test (the right-hand-side) the \emph{closer} $x$ is to~$\cC$ (the left-hand-side).
Locally testable codes are very important in computer science since they can be viewed as the~combinatorial core of Probabilistically Checkable Proofs (PCPs)~\cite{LTC:2006, Dinur:2006}.

Robustly testable codes were introduced by Ben-Sasson and Sudan in~\cite{Ben-Sasson:2006} as a way to obtain locally testable codes using tensor products.
Recall that given two linear codes $\cC_i\subseteq \F_q^{n}$, $i\in [2]$, we can define the corresponding (\emph{tensor}) \emph{product code} 
\[
\cC_1\otimes\cC_2 := \{c\in \F_q^{n_1\times n_2} \mid \forall i\in [n_1]\ \forall j\in [n_2]\colon c(\cdot, j) \in \cC_1, c(i,\cdot) \in \cC_2 \},
\]
where $\F_q^{n_1\times n_2}$ is the space of $n_1\times n_2$ matrices over $\F_q$, and for a~matrix $c$ we denote by $c(\cdot, j)$ and $c(i,\cdot)$ its $j$-th column and $i$-th row, respectively. In words, the code~$\cC_1\otimes\cC_2$ consists of all matrices where each column is from $\cC_1$ and each row is from~$\cC_2$. We say that $\cC_1 \otimes \cC_2$ is \emph{$\rho$-robustly testable} if for each $x\in \F_q^{n_1\times n_2}$ we have:
\begin{equation}\label{eqn:robust-df}
\rho\cdot \delta(x, \cC_1\otimes\cC_2) \le \frac12\left(\delta(x, \cC_1\otimes \F_q^{n_2}) + \delta(x, \F_q^{n_1}\otimes \cC_2)\right).
\end{equation}
We can interpret~$\rho$ as the parameter controlling the robustness of the following natural proximity test for $\cC_1\otimes\cC_2$. Given a~matrix $x\in \F_q^{n_1\times n_2}$ we can try to check whether it is close to the code $\cC_1\otimes\cC_2$ by picking uniformly at random one of its columns or rows and looking at the distance to $\cC_1$ and~$\cC_2$, respectively. If $\cC_1\otimes\cC_2$ is $\rho$-robustly testable, then from~(\ref{eqn:robust-df}) it follows that the \emph{smaller} the average distance we get using this test (the right-hand-side) the \emph{closer} $x$ is to $\cC_1\otimes\cC_2$ (the left-hand-side).

Given a~finite field $\F_q$ and two numbers $R_1\in (0,1)$ and $R_2\in (0,1)$, it is natural to ask whether for a~random pair of codes of length $n$ and dimensions $R_1 n$ and $R_2 n$, their product is robustly testable for some $\rho$ with high probability as $n\to\infty$. Our main result (Theorem~\ref{th:rand-expanding}) implies that this is indeed the case for every $\F_q$, $R_1\in (0,1)$, and $R_2\in (0,1)$. However, in this paper we prefer to prove this result using a different form of robust testability called \emph{product-expansion}, which we discuss later. Note that the product-expansion is in turn  closely related with yet another form of robust testability called \emph{agreement testability}, which is studied by Dinur and Kaufman in~\cite{Dinur:2017} (see also~\cite[Definition~2.8]{Dinur:stoc2022}). 
Let us recall this definition. A~code $\cC_1\otimes \cC_2$ is called \emph{$\rho$-agreement testable} if 
for every $c_1\in \cC^{(1)}:=\cC_1\otimes \F_q^{n}$ and $c_2\in  \cC^{(2)}:=\F_q^{n}\otimes \cC_2$ there exists $c\in\cC^{(1)}\cap\cC^{(2)} = \cC_1 \otimes \cC_2$ such that 
\begin{equation}\label{eqn:agreement}
    \rho\cdot (\norm{c_1 - c}_1 + \norm{c_2 - c}_2) \le \norm{c_1 - c_2},
\end{equation}
where $\norm{\cdot}_1$ and $\norm{\cdot}_2$ are respectively the fraction of non-zero columns and rows in a~matrix. In words, this definition means that the \emph{smaller} the amount of ``disagreement'' between $c_1\in \cC^{(1)}$ and $c_2\in\cC^{(2)}$ (the right-hand-side) the \emph{less} the fraction of columns in $c_1$ and rows in $c_2$ one has to change to get to the ``total agreement'' (the left-hand-side), i.e., to obtain a~codeword  $c\in\cC^{(1)}\cap\cC^{(2)}$.

Robust testability is usually studied for products of codes $\cC\subseteq\F_q^{n}$ of non-vanishing \emph{relative distance} $\delta(\cC) := \min_{c\ne c'\in \cC_i}\delta(c,c')$ and \emph{rate} $R(\cC):= \frac1n \dim\cC$ as $n\to\infty$. It is well-known~\cite{Valiant:2005, Goldreich:2012} that \emph{not} for every pair of codes $\cC_1, \cC_2$ of non-vanishing relative distance and rate the product code $\cC_1\otimes \cC_2$ is necessary robustly testable. It is interesting that quantum CSS codes also can be used to provide examples of non-robustly testable codes (see~Remark~\ref{rm:exam-non-robust}). Recall that a~quantum CSS code~~\cite{CSS:1996, CSS2:1996} $\cQ$ of~\emph{dimension~$k$} is defined by a~pair of classical linear codes $\CX, \CZ \subseteq \mathbb{F}_q^n$ such that $\CZ^{\perp} \subseteq
\CX$, and $k = \dim \CX / \CZ^{\perp}$. Its~\emph{minimum distance $d$} is defined as
$\min (\dX, \dZ)$, where $\dX$ and $\dZ$ are the minimal Hamming weights of
the~vectors from $\CX \setminus \CZ^{\perp}$ and $\CZ \setminus \CX^{\perp}$,
respectively. The codes $\CX$, $\CZ$ are usually represented respectively by parity-check
matrices~$\HX$, $\HZ$, and the condition
$\CZ^{\perp} \subseteq \CX$ is equivalent to $\HX \HZ^{\T} = \zm$ (commutativity condition), where $\HZ^*$ is the transpose of $\HZ$. As we will see later in Remark~\ref{rm:exam-non-robust}, this commutativity condition implies that the code $\cC_X\otimes \cC_Z$ is not robustly testable. In particular, for every linear code $\cC$ the code $\cC \otimes \cC^\perp$ is not robustly testable. 

It is also interesting to note that this observation\footnote{Recently we learned that a~similar observation is also independently made by Dinur and Vidick~\cite{Dinur:2022ex}.} shows that the Polishchuk-Spielman construction~\cite{Polishchuk:1994} of robustly testable codes based on the tensor product of two Reed-Solomon codes is essentially optimal. Indeed, it works only for pairs of Reed-Solomon codes of rates $R_1$ and $R_2$ provided that $R_1 + R_2 < 1 - \eps$, where $\eps > 0$ is some arbitrary small constant. However, it is well-known that the dual code $(\mathrm{RS}_q^{k})^\perp$ to the~Reed-Solomon code~$\mathrm{RS}_q^{k} := \{(f(\alpha))_{\alpha\in\F_q} \mid f \in \F_q[x], \deg f < k\}$ coincides with $\mathrm{RS}_q^{q-k}$. Hence if $k_2 \ge q - k_1$ then $(\mathrm{RS}_q^{k_1})^\perp = \mathrm{RS}_q^{q - k_1} \subseteq \mathrm{RS}_q^{k_2}$, and   
the code $\mathrm{RS}_q^{k_1} \otimes \mathrm{RS}_q^{k_2}$ is not robustly testable if
\[
R_1 + R_2 = k_1/n + k_2/n \ge 1
\]
as the length $n=q$ of the component Reed-Solomon codes increases.   

We already mentioned that robustly testable codes can be used to construct LTCs. But, as it turns out, they are also useful to construct  low-density parity-check quantum CSS codes. Recall that if the parity-check matrices in a family of codes (classical or quantum) are sparse (e.g., the weight of each row is bounded above by some constant $w$), then such codes are called \emph{low-density parity-check} (LDPC) codes~\cite{gallager1963, Mackay:2004}.
It is clear that LTCs with constant locality $w$ are LDPC codes.  
Robustly testable codes appeared in different forms in the first examples of asymptotically good\footnote{Recall that an~infinite collection of (classical or quantum) codes  is called (\emph{asymptotically}) \emph{good} if there exist a~number $\eps>0$ such that both the rate and the relative minimum distance of these codes are bounded below by~$\eps$.} classical LTCs~\cite{Dinur:stoc2022, Panteleev&Kalachev:stoc2022} and quantum LDPC (qLDPC) codes~\cite{Panteleev&Kalachev:stoc2022}. In~\cite{Dinur:stoc2022} Dinur \emph{et al.} use agreement testable codes~\cite{Dinur:2017}, which, as they show, are equivalent to the robustly testable codes if the codes have non-vanishing relative distance. In~\cite{Panteleev&Kalachev:stoc2022} the authors use the product-expansion property, which, as we will see later, is equivalent (in its strongest form) to the agreement testability of a~product of two codes, and hence also to the robust testability for products of codes of linear minimal distances. However, in the case of products of more than \emph{two} codes the situation is much more interesting. On the one hand, it is not hard to show that product-expansion always implies  robust testability. However, when we have a~product of more than two codes, it can be shown that robust testability no longer implies product-expansion~\cite{Kalachev:2023}. Note that the term \emph{product-expansion} was originally motivated by the fact that it is a form of coboundary \emph{expansion}~\cite{Linial:2006,Gromov:2010} in the~cochain complex naturally associated with the \emph{product} of several codes (see Appendix~\ref{app:prod-exp}).  Such complexes appeared in~\cite{Panteleev&Kalachev:stoc2022} for the analysis of local neighborhoods in a~two-dimensional complex corresponding to a~product of two Tanner codes.

\begin{remark}
In retrospect one can see that good LTCs and qLDPC codes from~\cite{Dinur:stoc2022,Panteleev&Kalachev:stoc2022} can be obtained respectively as the second and the first homology of the $2$-dimensional tensor product complex\footnote{The general definition of the \emph{tensor product complex $\cA \otimes_R \cB$} over an~arbitrary ring~$R$ can be found in~\cite[p.~7]{Brown:1982}.} $\cA \otimes_G \cB := \cA \otimes_R \cB$ over a~group algebra 
$R=\F_qG$ obtained from $1$-dimensional chain complexes $\cA$ and $\cB$, related to some linear codes. This general approach was first proposed as a~way to obtain qLDPC codes by the authors in~\cite{Panteleev&Kalachev:2021}, and later, in a~more general form, by Breuckmann and Eberhardt~\cite{Breuckmann:balanced:2021, Breuckmann:2021}. It is interesting that the algebraic construction $\cA\otimes_G \cB$, which is often used in the context of group cohomology~\cite[p.~55]{Brown:1982}, has several different geometrical interpretations. Depending on a situation, it may be viewed as a~\emph{lift} (i.e., a~$|G|$-fold covering) of the~\emph{product} of two topological spaces, the~\emph{balanced product} of two topological spaces with an~action of a~group~$G$, or as a~\emph{fiber bundle}.   
This explains the plethora of different names used in the literature to describe such qLDPC codes~\cite{Panteleev&Kalachev:2021,Panteleev&Kalachev:stoc2022, Breuckmann:balanced:2021, Breuckmann:2021, Hastings:2021:fiber}. Also note that one can automatically apply this tensor product construction to graphs, hypergraphs, and even to general incidence structures since they all can be viewed as $1$-dimensional complexes~\cite{Breuckmann:balanced:2021,Panteleev&Kalachev:stoc2022}. 
\end{remark}

It was shown in~\cite{Panteleev&Kalachev:2021} that combining LDPC codes of constant rate
(e.g., Tanner codes) the construction $\cA \otimes_G \cB$ can give qLDPC codes of constant rate%
\footnote{A~similar observation (without a~proof) was also made in~\cite{Breuckmann:balanced:2021}.}. 
However, it was unclear at that time how to combine two codes to also get linear minimum distance. This goal was achieved in~\cite{Panteleev&Kalachev:stoc2022} by introducing a~new construction, called \emph{expander lifted product codes}, that combines two Tanner codes such that their local codes are two-sided robustly testable (in a~weak form). This gave a~first family of good qLDPC codes. In the current paper, we show that such two-sided robustly testable codes also exist in the strong form described below.

As we already mentioned before, the code $\cC \otimes \cC^\perp$ is never robustly testable. This fact suggests that a~straightforward approach to build good qLDPC codes using the tensor product $\cA \otimes_G \cB$ of two good classical codes may fail despite that we have some numerical evidence~\cite[Example~4]{Panteleev&Kalachev:2021} that large distances in the component codes $\cA$ and $\cB$ may lead to large distances in the quantum codes $\cA \otimes_G \cB$, as it was in the original hypergraph product construction~\cite{Tillich&Zemor:2009}. Take, for example, the explicit family of qLDPC codes $\cA \otimes_G \cA^*$ conjectured to be good by Breuckmann and Eberhardt in~\cite{Breuckmann:balanced:2021}, where $\cA$ is the Tanner code defined on a~Cayley graph $\cay(G,S)$ with a~local code $\cL$. Since these codes are similar%
\footnote{Despite the similarity, such codes have significantly different properties. For example, they do not look locally as product codes, which is one of the key properties used in the proof from~\cite{Panteleev&Kalachev:stoc2022}.} 
to the lifted product codes, it is tempting to adapt the proof from~\cite{Panteleev&Kalachev:stoc2022} to this case. However, for this proof to work, the product code $\cL \otimes \cL^\perp$ should be robustly testable, which is \emph{never} the case. Thus we have an~interesting open problem. 

\begin{problem}
    Whether the observation that $\cL\otimes \cL^\perp$ is never robustly testable leads to a~counterexample to the conjecture of Breuckmann and Eberhardt from~\cite{Breuckmann:balanced:2021} or this explicit family of qLDPC codes may still be shown to have linear distance using a~different proof idea?       
\end{problem}

We believe this open problem is very important. Indeed, if the conjecture were true, that would give us a~\emph{very explicit} construction of good qLDPC codes where we just require that the local code $\cL$ and its dual $\cL^\perp$ have large distances simultaneously. And since there are plenty of such codes (e.g., the Reed-Solomon codes), a~probabilistic construction and quite complicated proofs of robustness, given in~\cite{Panteleev&Kalachev:stoc2022} and in the current work, could be completely eliminated.

Note that recently Leverrier and Z{\'e}mor proposed~\cite{Leverrier:focs2022} a~very interesting simplification of our construction of good qLDPC codes from~\cite{Panteleev&Kalachev:stoc2022}. This new construction called \emph{quantum Tanner codes} has a~more natural interpretation and also gives codes with better lower distance bounds at the expense of increasing the weights of the parity-check matrices. We believe that the strong form of robustness, shown in the current work, can be used to simplify the proofs and to give better distance bounds in~\cite{Panteleev&Kalachev:stoc2022, Leverrier:focs2022}.

Though the qLDPC codes from~\cite{Panteleev&Kalachev:stoc2022, Leverrier:focs2022} have linear minimum distance, one also needs efficient decoding algorithms for such codes to use them in practice. It was conjectured in~\cite{Panteleev&Kalachev:stoc2022} that small-set-flip decoding algorithm from~\cite{Leverrier:2015} can be used to correct in linear time any adversarial errors up to the constant fraction of the code length. Quite recently~\cite{Gu:stoc2023:qpdpc-decoder, Leverrier:qldpcdecoder:2023} this conjecture was confirmed for both types of good qLDPC codes. Moreover, almost at the same time Dinur \emph{et al.}~\cite{Dinur:decoders} showed that the $3$-term complexes, equivalent to the complexes $\cA \otimes_G \cB$ from \cite[Remark on p.~379]{Panteleev&Kalachev:stoc2022} and similar to the ones naturally associated with the classical LTCs from~\cite{Dinur:stoc2022}, also give linear time decodable  good qLDPC codes if the product of the corresponding local codes is two-sided robustly testable. It is very important to note that in~\cite{Dinur:decoders} the authors also independently show (in the case of the binary field $\F_2$) the result equivalent to our Theorem~\ref{th:rand-expanding} on the optimal robustness, and also show the connection between the robustness and the coboundary expansion of product codes, similar to the one given in Appendix~\ref{app:prod-exp}.

Finally, let us mention that quite surprisingly qLDPC codes of large distances also found several interesting applications outside of the area of error correcting codes. Recently, they were used to prove new breakthrough results in complexity theory~\cite{Hopkins:focs2022, Anshu:stoc2023} and to provide spaces with exotic properties in systolic geometry~\cite{Freedman:2021}.

\section{Product-expansion}


In this section, we define a~property of a~collection of codes called \emph{product-expansion}\footnote{The \emph{product-expansion} property in terms of dual product codes $\cC_1\boxplus\cC_2$ was first defined in~\cite{Panteleev&Kalachev:stoc2022} in a~different form, which, as we will see in Appendix~\ref{app:prod-exp}, is equivalent (in the strongest case) to the current definition. Later, this property was reformulated in a~very elegant form by Leverrier and Z{\'e}mor in~\cite{Leverrier:focs2022}, where it is called \emph{robustness with resistance to puncturing}. Here we further simplify this definition and also consider its multi-dimensional version.} and show its connection to agreement testability of product of codes. Before we proceed, let us first remind some standard definitions and introduce some new notation related with the products of codes.

Given linear codes $\cC_1,\dots,\cC_m$ over $\F_q$ we can define the (\emph{tensor}) \emph{product code} 
\[
\cC_1\otimes\dots\otimes\cC_m := \{c\in \F_q^{n_1\times\dots\times n_m} \mid \forall i\in [m]\ \forall \ell\in \cL_i\colon c|_\ell\in  \cC_i\},
\]
where $\F_q^{n_1\times\dots\times n_m}$ is the set of functions $c\colon[n_1]\times \dots\times [n_m] \to \F_q$  and $\cL_i$ is the set of  lines parallel to the $i$-th axis in the $m$-dimensional grid $[n_1]\times \dots\times [n_m]$, i.e., 
\[
\cL_i := \{\{x + s\cdot e_i \mid s \in [n_i] \} \mid x\in [n_1]\times \dots\times [n_m], x_i = 0 \}.
\]
Here $e_i$ denotes the vector $(0,\dots,0,1,0\dots,0) \in [n_1]\times \dots \times [n_m]$ with $1$ at the $i$-th position. 

It is convenient to introduce a~notation for the dual code of a~product code~\cite{Wolf:1965, Chien:1973}. For linear codes $\cC_1\subseteq \F_q^{n_1}$, $\cC_2\subseteq \F_q^{n_2}$ we denote by $\cC_1\boxplus \cC_2$ the code $(\cC_1^\bot\otimes \cC_2^\bot)^\bot=\cC_1\otimes \F_q^{n_2}+\F_q^{n_1}\otimes \cC_2 \subseteq \F_q^{n_1\times n_2}$.
Given a~collection $\cC = (\cC_i)_{i\in [m]}$ of linear codes over~$\F_q$, we can define the codes 
\[
\cC^{(i)} := \F_q^{n_1} \otimes\dots\otimes \cC_i \otimes\dots\otimes \F_q^{n_m} = \{ c\in \F_q^{n_1\times\dots\times n_m} \mid \forall \ell\in \cL_i\colon c|_\ell\in  \cC_i \} .
\]
It is clear that $\cC_1\otimes \dots \otimes \cC_m = \cC^{(1)}\cap \dots \cap \cC^{(m)}$ and $\cC_1\boxplus \dots \boxplus \cC_m = \cC^{(1)}+ \dots + \cC^{(m)}$. Note that every code $\cC^{(i)}$ is the direct sum of $|\cL_i| = \frac{1}{n_i}\prod_{j\in [m]} n_j$ copies of the code $\cC_i$. For $x\in \F_q^{n_1\times\dots\times n_m}$ we denote by $|x|_i$ and $\norm{x}_i$ respectively the number and the fraction of the lines $\ell\in\cL_i$ such that $a|_\ell \ne 0$. It is clear that $\norm{x}_i = \frac{1}{|\cL_i|}|x|_i$. Let us also recall that by $|x|$ and $\norm{x}$ we denote respectively the \emph{Hamming weight} (i.e., the number of non-zero entries) and the \emph{normalized Hamming weight} (i.e., the fraction of non-zero entries) of $x$. Now we are ready to give our main definition.

\begin{definition}[product-expansion]
Given a~collection $\cC = (\cC_i)_{i\in [m]}$ of linear codes $\cC_i\subseteq \F_q^{n_i}$, we say that $\cC$ is \emph{$\rho$-product-expanding} if every codeword $c\in \cC_1\boxplus \dots \boxplus \cC_m$ can be represented as a~sum $c = \sum_{i\in[m]} a_i$ where $a_i\in \cC^{(i)}$ for all $i\in [m]$, and the following inequality holds:
\begin{equation}\label{eq:prod-exp}
\rho\sum_{i\in [m]} \norm{a_i}_i \le \norm{c} .    
\end{equation}
\end{definition}

It is not hard to check that (\ref{eq:prod-exp}) can be also expressed as
\[ \rho\sum_{i\in [m]} n_i\abs{a_i}_i \le \abs{c}.\]

\begin{remark}\label{rm:exam-non-robust}
Note that from the definition of $\rho$-product-expansion it immediately follows that any pair of codes $(\cC_1, \cC_2)$ defining a~CSS code (i.e., we have $H_1 H_2^* = 0$ for their parity-check matrices) is \emph{not} $\rho$-product-expanding. Indeed, since $x\in \cC_1\boxplus \cC_2$ \Iff $H_1 x H_2^* = 0$ this just follows from the fact that the identity matrix $(\delta_{ij})_{n\times n}\in \F_q^{n\times n}$ is a~codeword of $\cC_1 \boxplus\cC_2$, and $(\delta_{ij})_{n\times n}$ can not be obtained as a~sum of less than $n$ rows or columns.  Thus the pair $(\cC_1, \cC_2)$ is not $\rho$-product-expanding for any fixed $\rho$ as $n\to\infty$.
\end{remark}

We can also define the \emph{product-expansion factor} $\rho(\cC)$ for the collection $\cC$ as the maximal value of $\rho$ such that $\cC$ is $\rho$-product-expanding. In Appendix~\ref{sc:chain} it is shown that the product-expansion  corresponds to the coboundary expansion~\cite{Linial:2006,Gromov:2010} in the~tensor product complex obtained from the collection of codes $\cC = (\cC_i)_{i\in [m]}$. This tensor product complex can be best understood as a~\emph{local system} (also known as a~\emph{cellular sheaf}%
\footnote{Local systems are also often called \emph{cellular sheaves} (see Remark~\ref{rm:sheaves}).}%
) defined on the clique complex%
    \footnote{The \emph{clique complex} $\mathbf{X}(\cG)$ of a~graph $\cG$ is the simplicial complex with the set of vertices $V(\cG)$ where $S\in  X(\cG)$ \Iff the set $S$ gives a~\emph{clique} in $\cG$, i.e., the vertices from $S$ are pairwise connected in~$\cG$.} 
$\mathbf{X}(K_{n_1,\dots,n_m})$ of the complete $m$-partite graph $K_{n_1,\dots,n_m}$. With this interpretation in mind, the product-expansion factor $\rho(\cC)$ can be seen as a~natural generalization of the \emph{normalized Cheeger constant} of a~graph also known as the \emph{conductance}. Indeed, in the special case when $m=2$ and the codes $\cC_1$, $\cC_2$ are the repetition $[n, 1, n]$ code, $\rho(\cC_1,\cC_2)$ is exactly the conductance of the bipartite graph $K_{n,n}$.  

Note that there is a~degenerate case when $\cC_i=\F_q^{n_i}$ for some $i\in [m]$ in a~collection $\cC = (\cC_i)_{i\in [m]}$, in which case we call this collection \emph{degenerate}. In this case,  $\cC^{(i)}=\F_q^{n_1\times \cdots\times n_m}$, which implies $\cC_1\boxplus\cdots\boxplus \cC_m=\F_q^{n_1\times \cdots\times n_m}$, and therefore the collection $\cC$ is $(1/n_i)$-product-expanding. For example, if $n_i = 1$ (i.e., $\cC_i = \F_q$), then the collection is always $1$-product-expanding\footnote{This degenerate case appears only because we used here the \emph{normalized} Cheeger constant.} \emph{independently} of the codes $\cC_j$, $j\ne i$.

Note that $\rho$-product-expansion of a~non-degenerate collection of codes $\cC$ implies that $\delta(\cC_i)\ge \rho$ for all $i\in [m]$. Let us formulate this observation in a~more general form:
If a~non-degenerate collection $\cC=(\cC_i)_{i\in [m]}$ of codes $\cC_i\subsetneq \F_q^{n_i}$ is a~$\rho$-product-expanding, then each subcollection $\cC_I=(\cC_i)_{i\in I}$, $I\subseteq [m]$, is also $\rho$-product-expanding (Lemma \ref{lemma:subset-exp} in Appendix~\ref{sc:auxlemmas}).


Let us now show that in the case of two codes $\rho$-product-expansion is equivalent to $\rho$-agreement testability.
\begin{lemma}\label{lemma:prodexp-vs-agreement}
    A pair of codes $\cC_1,\cC_2\subseteq \F_q^n$ is $\rho$-product-expanding \Iff $\cC_1\otimes \cC_2$ is $\rho$-agreement testable.
\end{lemma}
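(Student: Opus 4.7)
The plan is to establish a direct dictionary between decompositions of a codeword $c \in \cC_1 \boxplus \cC_2$ as $c = a_1 + a_2$ with $a_i \in \cC^{(i)}$ and agreement witnesses $\tilde c \in \cC_1 \otimes \cC_2$ for a pair $(c_1, c_2) \in \cC^{(1)} \times \cC^{(2)}$. The key algebraic observation is: given any such $c_1, c_2, \tilde c$, set $a_1 := c_1 - \tilde c$ and $a_2 := \tilde c - c_2$. Since $c_1, \tilde c \in \cC^{(1)}$ and $c_2, \tilde c \in \cC^{(2)}$, we have $a_i \in \cC^{(i)}$, and $a_1 + a_2 = c_1 - c_2$. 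Conversely, if $c = a_1 + a_2$ with $a_i \in \cC^{(i)}$, setting $c_1 := a_1$, $c_2 := -a_2$, and $\tilde c := 0$ gives a trivial witness with $c_1 - c_2 = c$; more generally, from a decomposition of $c_1 - c_2$ one recovers $\tilde c := c_1 - a_1 = c_2 + a_2$. The sign signs are absorbed by the fact that $\|\cdot\|_i$ and $\|\cdot\|$ are invariant under negation.

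For the forward direction ($\rho$-product-expansion $\Rightarrow$ $\rho$-agreement testability), I would start from $c_1 \in \cC^{(1)}$, $c_2 \in \cC^{(2)}$, apply the product-expansion hypothesis to $c := c_1 - c_2 \in \cC_1 \boxplus \cC_2$ to get $c = a_1 + a_2$ with $a_i \in \cC^{(i)}$ satisfying $\rho(\|a_1\|_1 + \|a_2\|_2) \le \|c_1 - c_2\|$, and then define $\tilde c := c_1 - a_1 = c_2 + a_2 \in \cC^{(1)} \cap \cC^{(2)} = \cC_1 \otimes \cC_2$. The required inequality follows immediately because $c_1 - \tilde c = a_1$ and $c_2 - \tilde c = -a_2$, so the two quantities $\|c_1 - \tilde c\|_1 + \|c_2 - \tilde c\|_2$ and $\|a_1\|_1 + \|a_2\|_2$ coincide.

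For the backward direction ($\rho$-agreement testability $\Rightarrow$ $\rho$-product-expansion), given $c \in \cC_1 \boxplus \cC_2 = \cC^{(1)} + \cC^{(2)}$, pick any decomposition $c = c_1 + c_2'$ with $c_1 \in \cC^{(1)}$ and $c_2' \in \cC^{(2)}$, and set $c_2 := -c_2'$ so that $c_1 - c_2 = c$. Apply agreement testability to obtain $\tilde c \in \cC_1 \otimes \cC_2$ with $\rho(\|c_1 - \tilde c\|_1 + \|c_2 - \tilde c\|_2) \le \|c\|$. Define $a_1 := c_1 - \tilde c \in \cC^{(1)}$ and $a_2 := \tilde c - c_2 = \tilde c + c_2' \in \cC^{(2)}$; then $a_1 + a_2 = c$ and the desired product-expansion bound is exactly the one just obtained.

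Since the argument is a one-line symbolic translation in each direction, there is no real obstacle; the only thing to be careful about is bookkeeping of signs and making sure that the decomposition $c = c_1 + c_2'$ (whose existence is simply the identity $\cC_1 \boxplus \cC_2 = \cC^{(1)} + \cC^{(2)}$) is available at the start of the backward direction. Everything else is forced by the ambient linear algebra and by the fact that both $\|\cdot\|_i$ and $\|\cdot\|$ are invariant under sign flips.
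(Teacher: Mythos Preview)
Your proposal is correct and follows essentially the same approach as the paper's proof: in both directions you use the identification $\tilde c = c_1 - a_1 = c_2 + a_2$ (the paper calls it $c$ or $c'$) to pass between a decomposition $c_1-c_2=a_1+a_2$ and an agreement witness, with the sign flip $c_2 \leftrightarrow -c_2'$ handled exactly as in the paper. The only cosmetic difference is your more explicit bookkeeping of signs and of the availability of an initial decomposition $c = c_1 + c_2'$ via $\cC_1\boxplus\cC_2 = \cC^{(1)}+\cC^{(2)}$.
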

\begin{proof}
    First, let us show that product-expansion implies agreement testability. Consider a~$\rho$-product-expanding pair of codes $(\cC_1,\cC_2)$ and some matrices $c_1\in\cC^{(1)}$, $c_2\in\cC^{(2)}$. 
    Since $c_1-c_2\in \cC_1\boxplus \cC_2$ and the pair $(\cC_1,\cC_2)$ is $\rho$-product-expanding we have $c_1-c_2=a_1+a_2$ for some $a_1\in\cC^{(1)}$, $a_2\in\cC^{(2)}$ such that 
    \begin{equation}\label{eqn:prodexp2agreement}
        \rho(\|a_1\|_1+\|a_2\|_2)\le \|c_1-c_2\|.
    \end{equation}
    Put $c:=c_1-a_1$. Then $c\in \cC^{(1)}$ and $c=c_2+a_2\in \cC^{(2)}$. Hence $c\in \cC^{(1)}\cap\cC^{(2)} = \cC_1\otimes \cC_2$ and \eqref{eqn:prodexp2agreement} is equivalent to~\eqref{eqn:agreement}. Therefore $\cC_1\otimes \cC_2$ is $\rho$-agreement testable.
    
    Now let us show that the agreement testability implies product expansion. Suppose $\cC_1\otimes \cC_2$ is $\rho$-agreement testable. Consider a word $c=c_1+c_2\in \cC_1\boxplus \cC_2$, $c_1\in \cC^{(1)}$, $c_2\in\cC^{(2)}$. Then from agreement testability there exists $c'\in \cC_1\otimes\cC_2$ such that $\rho(\|c_1-c'\|_1+\|c_2+c'\|_2)\le \|c_1+c_2\|=\|c\|$. Put $a_1:=c_1-c'$, $a_2:=c_2+c'$. Then $c=a_1+a_2$ and $\rho(\|a_1\|_1+\|a_2\|_2)\le \|c\|$, therefore the pair of codes $(\cC_1,\cC_2)$ is $\rho$-product-expanding.
\end{proof}

Denote by $\Gr(n,k)$ the~\emph{Grassmannian}, i.e., the set of all linear subspaces in $\F_q^n$ of dimension $k$.
In the next several sections, we will prove the following theorem. 

\begin{restatable}{theorem}{ThMain}\label{th:rand-expanding}
    For every $\eps_1\in(0,1)$, $\eps_2\in(0,1)$ there exists $\rho>0$ such that a~pair of codes $(\cC_1,\cC_2)$ picked uniformly at random from $\Gr(n,k_1)\times \Gr(n,k_2)$, where $k_i \le n_i(1-\eps_i)$, $i\in [2]$, is $\rho$-product-expanding with high probability as $n\to\infty$.
\end{restatable}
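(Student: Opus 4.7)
The plan is to use the probabilistic method. By Lemma~\ref{lemma:prodexp-vs-agreement}, $\rho$-product-expansion is equivalent to $\rho$-agreement testability, which asks that for every $(a_1,a_2)\in\cC^{(1)}\times\cC^{(2)}$ there exists $b\in\cC_1\otimes\cC_2$ with $\rho(\|a_1-b\|_1+\|a_2+b\|_2)\le\|a_1+a_2\|$. Any violating pair can be shifted by a suitable $b_0$ to one where $b=0$ already minimises $\|a_1-b\|_1+\|a_2+b\|_2$; call such pairs \emph{reduced}. It therefore suffices to bound by $o(1)$ the expected number of reduced pairs satisfying $\|a_1+a_2\|<\rho(\|a_1\|_1+\|a_2\|_2)$, parameterising them by the shape $(s,t)=(|a_1|_1,|a_2|_2)$ and the column- and row-span dimensions $(d_1,d_2)$.

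The probability factor is clean: since $\cC_1\in\Gr(n,k_1)$ and $\cC_2\in\Gr(n,k_2)$ are independent and uniform,
\[
\Pr\bigl[a_1\in\cC^{(1)},\,a_2\in\cC^{(2)}\bigr]=\frac{\qbin{n_1-d_1}{k_1-d_1}}{\qbin{n_1}{k_1}}\cdot\frac{\qbin{n_2-d_2}{k_2-d_2}}{\qbin{n_2}{k_2}}\le q^{-\eps_1 n_1 d_1-\eps_2 n_2 d_2}
\]
for any fixed matrices $a_1,a_2$ of the given ranks. The combinatorial count of matrix pairs with the prescribed shape and weight $|a_1+a_2|<w:=\rho(sn_1+tn_2)$ is handled by splitting the grid into the three blocks $S_2^c\times S_1$, $S_2\times S_1^c$, and $S_2\times S_1$ (where $S_1,S_2$ are the column/row supports): on the first two, $a_1+a_2$ equals $a_1$ or $a_2$ alone, so the weight constraint gives Chernoff-type entropy bounds; on the central block $(a_1,a_2)$ must nearly cancel. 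Including the binomial factors $\binom{n_2}{s}\binom{n_1}{t}$ for the supports and combining with the probability bound above yields, up to polynomial factors in $n$, an expected count of the form $q^{-(\eps_1 n_1 d_1+\eps_2 n_2 d_2)+g(\rho,q)(sn_1+tn_2)}$, where $g(\rho,q)\to 0$ as $\rho\to 0$; this is small provided $d_1,d_2$ are comparable to $s,t$.

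The main obstacle is thus controlling the ranks $(d_1,d_2)$ for reduced bad pairs: the estimate above is useless when $d_1\ll s$ or $d_2\ll t$. I would establish $d_1=\Omega(s)$ and $d_2=\Omega(t)$ by combining reducedness with a Gilbert-Varshamov-type bound on random codes. Intuitively, a low column-rank would mean many linear relations among the non-zero columns of $a_1$, each yielding a candidate shift $b\in\cC_1\otimes\cC_2$ strictly reducing $\|a_1-b\|_1+\|a_2+b\|_2$, unless $\cC_2^\perp$ contains unexpectedly short codewords, a rare event for random $\cC_2$ with $k_2\le(1-\eps_2)n$. I expect the proof to split into two regimes: small $(s,t)$, handled directly by random-code minimum-distance bounds on $\cC_1,\cC_2$ (each non-zero column or row must have weight $\Omega(n)$, forcing $|a_1+a_2|$ to be large unless $st$ is already large), and large $(s,t)$, where the expectation estimate above drives the bad count to zero once the ranks are under control. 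Taking $\rho=\rho(\eps_1,\eps_2,q)>0$ to be the minimum of the thresholds in both regimes and applying a union bound over all shapes then completes the proof with high probability as $n\to\infty$.
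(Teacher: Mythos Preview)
Your first-moment approach is genuinely different from the paper's argument, but as written it has two real gaps, one of which looks fatal.

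\textbf{Gap 1 (rank control is not established).} You acknowledge that the whole scheme rests on showing $d_1=\Omega(s)$ and $d_2=\Omega(t)$ for reduced bad pairs, and offer as justification that ``a low column-rank would mean many linear relations among the non-zero columns of $a_1$, each yielding a candidate shift $b\in\cC_1\otimes\cC_2$''. But a linear relation among the columns of $a_1$ is a vector in $\ker a_1^{*}$; it does not produce an element of $\cC_1\otimes\cC_2$, and I do not see how to turn it into one. Reducedness is a statement about all $b\in\cC_1\otimes\cC_2$, and any such $b$ must have its rows in $\cC_2$; low column-rank of $a_1$ says nothing about $\cC_2$. This step is really the heart of the problem, and the proposal leaves it as intuition.

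\textbf{Gap 2 (the expectation bound fails for small $\eps_1,\eps_2$).} Even granting $d_1\approx s$ and $d_2\approx t$, your estimate cannot close for large $(s,t)$. Fix supports $S_1,S_2$. Once you choose $c:=a_1+a_2$ of weight $<\rho n(s+t)$, the block $a_1(S_2,S_1)$ is still completely free (it then determines $a_2(S_2,S_1)=c(S_2,S_1)-a_1(S_2,S_1)$), contributing a factor $q^{st}$ to the count. Your probability savings are at best $q^{-\eps_1 n d_1-\eps_2 n d_2}$ with $d_1\le s$, $d_2\le t$, i.e.\ at best $q^{-n(\eps_1 s+\eps_2 t)}$. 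For $s=t=n/2$ this requires $\eps_1+\eps_2>1/2$, so the scheme breaks down precisely in the regime of small $\eps_i$ that the theorem must cover. The underlying reason is that you treat the events $\{\im a_1\subseteq\cC_1\}$ and $\{\im a_2^{*}\subseteq\cC_2\}$ separately, obtaining savings additive in $\eps_1,\eps_2$; the condition $x\in\cC_1\boxplus\cC_2$ is a joint constraint $H_1 x H_2^{*}=0$ and is much stronger.

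The paper avoids both issues by working not with pairs $(a_1,a_2)$ but with the single codeword $x=a_1+a_2\in\cC_1\boxplus\cC_2$. For high-rank $x$ it proves (Lemma~\ref{lemma:prodexp-highrank}) that $\Pr[x\in\cC_1\boxplus\cC_2]\le 5q^{-r_1 r_2/4}$, which gives savings multiplicative in $\eps_1\eps_2$ and beats the union bound over all low-weight $x$ for any $\eps_1,\eps_2>0$. For low-rank $x$ it does not use a union bound at all: it isolates a deterministic property~$(*)$ of each code (sparse subspaces meet the code in less than half their dimension), shows random codes have it w.h.p.\ (Lemma~\ref{lemma:subspace-int}), and then proves that under~$(*)$ every low-weight $x\in\cC_1\boxplus\cC_2$ of low rank must contain a large zero rectangle (Lemma~\ref{lemma:no-sparse-lowrank}), which by Lemma~\ref{lemma:zeroangle} yields the required short decomposition into rows and columns. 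If you want to salvage a first-moment argument, the first step would be to replace your pairwise probability bound by something that exploits $H_1 x H_2^{*}=0$ jointly.
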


\begin{restatable}[two-sided robustness]{corollary}{CrMain}\label{cr:main}
    For every $R_1\in(0,1)$, $R_2\in(0,1)$ there exists $\rho>0$ such that for a~pair of codes $(\cC_1,\cC_2)$ picked uniformly at random from $\Gr(n,k_1) \times \Gr(n,k_2)$ the codes $C_1\otimes C_2$ and $C_1^\perp\otimes C_2^\perp$ are $\rho$-robustly testable with high probability as $n\to\infty$, $k_1/n\to R_1$, and $k_2/n\to R_2$.
\end{restatable}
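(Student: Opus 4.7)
The plan is to derive this corollary directly from Theorem~\ref{th:rand-expanding} by applying it twice---once to the primal pair and once to the dual pair---and then converting product-expansion into robust testability via Lemma~\ref{lemma:prodexp-vs-agreement}.

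Since $R_i\in(0,1)$ and $k_i/n\to R_i$, for all sufficiently large $n$ we have both $k_i\le n(1-(1-R_i)/2)$ and $n-k_i\le n(1-R_i/2)$. A first application of Theorem~\ref{th:rand-expanding} with $\eps_i=(1-R_i)/2$ yields a constant $\rho_1>0$ such that $(\cC_1,\cC_2)$ is $\rho_1$-product-expanding with high probability. To handle the dual pair, I would invoke the fact that the duality map $\cC\mapsto\cC^\perp$ is a bijection $\Gr(n,k)\to\Gr(n,n-k)$ that intertwines the transitive $GL_n(\F_q)$ actions (via inverse-transpose), and hence sends uniform measure to uniform measure. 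Thus $(\cC_1^\perp,\cC_2^\perp)$ is uniform on $\Gr(n,n-k_1)\times\Gr(n,n-k_2)$, and a second application of Theorem~\ref{th:rand-expanding} with $\eps_i=R_i/2$ produces a $\rho_2>0$ such that the dual pair is $\rho_2$-product-expanding with high probability. A union bound gives both properties simultaneously with joint constant $\rho_0:=\min(\rho_1,\rho_2)>0$ depending only on $R_1,R_2$.

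By Lemma~\ref{lemma:prodexp-vs-agreement}, both $\cC_1\otimes\cC_2$ and $\cC_1^\perp\otimes\cC_2^\perp$ are $\rho_0$-agreement testable. To turn this into robust testability, for any $x\in\F_q^{n_1\times n_2}$ I would select $c_1\in\cC^{(1)}$ and $c_2\in\cC^{(2)}$ realizing $\delta(x,\cC^{(i)})=\|x-c_i\|$ (by independent column-wise, resp.\ row-wise, decoding of $x$), apply agreement testability to obtain $c\in\cC_1\otimes\cC_2$ satisfying $\rho_0(\|c_1-c\|_1+\|c_2-c\|_2)\le\|c_1-c_2\|$, and combine this with the triangle bound $\|c_1-c_2\|\le\|x-c_1\|+\|x-c_2\|$. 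Using the elementary line-vs-entry comparison $\|a\|\le\|a\|_i$ for any $a\in\cC^{(i)}$ (each nonzero line contributes at most $n_i$ entries), the triangle inequalities $\|x-c\|\le\|x-c_i\|+\|c_i-c\|$ averaged over $i\in[2]$ give
\[
\delta(x,\cC_1\otimes\cC_2)\le\|x-c\|\le\tfrac{1}{2}\bigl(1+\tfrac{1}{\rho_0}\bigr)\bigl(\delta(x,\cC^{(1)})+\delta(x,\cC^{(2)})\bigr),
\]
which is exactly $\rho$-robust testability with $\rho:=\rho_0/(\rho_0+1)$. The same argument, applied to the dual pair, completes the proof.

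The main obstacle lies entirely inside Theorem~\ref{th:rand-expanding} itself; given it, the corollary is essentially formal. The only auxiliary checks are the invariance of the uniform Grassmannian measure under duality (standard, via the $GL_n(\F_q)$ action) and the elementary norm comparison used to pass from agreement testability to robust testability, which do not rely on any additional distance hypothesis since the upper bound $\|a\|\le\|a\|_i$ holds trivially for $a\in\cC^{(i)}$.
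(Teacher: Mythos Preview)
Your proposal is correct and follows essentially the same route as the paper: apply Theorem~\ref{th:rand-expanding} once to $(\cC_1,\cC_2)$ and once to $(\cC_1^\perp,\cC_2^\perp)$, take a union bound, and then convert product-expansion to robust testability via agreement testability (Lemma~\ref{lemma:prodexp-vs-agreement}). The only differences are cosmetic: the paper chooses the slack as $\delta=\min(R_1,R_2,1-R_1,1-R_2)/2$ where you use $(1-R_i)/2$ and $R_i/2$, and the paper outsources the final step to \cite[Lemma~2.9]{Dinur:stoc2022} (obtaining $\rho=\rho'/(2(\rho'+1))$) while you supply a direct averaging argument (obtaining the slightly sharper $\rho=\rho_0/(\rho_0+1)$); your explicit remark that duality preserves the uniform measure on the Grassmannian is left implicit in the paper.
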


We will often use the following short-hand notations: $x(I,\cdot) := x|_{I \times [n]}$, $x(\cdot,J) := x|_{[n]\times J}$, and $x(I,J) := x|_{I\times J}$, where $x\in \F_q^{n\times n}$, $I,J\subseteq [n]$. If $I\subseteq [n]$ we denote by $\F_q^I$ the linear space $\{x\in \F_q^n \mid \forall i\in [n]\setminus I\colon x_i = 0 \}$. Hence, by definition, we assume that $\F_q^I \subseteq \F_q^n$.

\subsection{Proof Outline}

Let us briefly explain an~idea of the proof. Consider two codes $\cC_1, \cC_2\subseteq\F_q^n$ of minimal distances%
\footnote{Here we refer to the standard minimal distance $d(\cC) := n\cdot \delta(\cC)$.}  $d_1=d(\cC_1)$ and $d_2=d(\cC_2)$, respectively. We say that $x\in \F_q^{n\times n}$ has a \emph{zero rectangle} $A\times B$, where $A,B\subseteq [n]$, if $|A|> n-d_1$, $|B|> n-d_2$, and $x(A,B)=0$. 
Informally, the proof involves the following two steps. 
\begin{enumerate}
\item If a~codeword $x\in \cC_1\boxplus \cC_2$ has a~zero rectangle $A\times B$, then $x$ is a~sum of $n-|A|$ rows from $\cC_2$ and $n-|B|$ columns from~$\cC_1$ (Lemma \ref{lemma:zeroangle});
\item A pair of random linear codes with high probability has the property that every codeword $x\in \cC_1\boxplus \cC_2$ of weight $\delta n^2$ has a zero rectangle of size $n(1-O(\delta))\times n(1-O(\delta))$;
\end{enumerate}

As the result, every codeword $x\in \cC_1\boxplus \cC_2$ of weight~$\delta n^2$ can be represented as a sum of $O(\delta)$ columns from~$\cC_1$ and rows from~$\cC_2$, which is exactly what is needed to prove the result.
The main difficulty on this path is to show the second statement from the above list. We need to prove that it holds for all codewords of weight $\le \delta_0 n^2$, where  $\delta_0>0$ is some fixed parameter, which does not depend on~$n$. Note that the number of such codewords does not exceed $q^{H_q(\delta_0)n^2}$. 

In fact, it is not hard to show that for some fixed $x$ the probability of the event $x\in \cC_1\boxplus \cC_2$ is bounded above as  $q^{-\Omega(n\rk x)}$ (Lemma \ref{lemma:prodexp-highrank}).
For the set of all matrices $x$ with $\rk x = \Omega(n)$ the probability that at least one of them is from the code $\cC_1\boxplus\cC_2$ can be estimated by the union bound. 
The main problem is the matrices where $\rk x=o(n)$ since the union bound is not enough here. 
We show that one can avoid this problem considering a~special property of the codes $\cC_1$ and $\cC_2$, which holds for random codes with high probability and excludes codewords in $\cC_1\boxplus\cC_2$ with a~small rank. 

To define this special property let us first introduce the notion of $\alpha$-sparseness. We say that a~vector $v\in \F_q^n$ is \emph{$\alpha$-sparse} if its Hamming weight is bounded above by $\alpha n$. A~subspace $V\subseteq \F_q^n$ is called \emph{$\alpha$-sparse} if it can be spanned by (zero or more) $\alpha$-sparse vectors. 
We say that a~subspace~$U\in \Gr(n,n-r)$ has property $(*)$ if for every $\alpha$-sparse subspace $V$ such that $\dim V\le r$ and $\alpha=H_q^{-1}(r/8n)$ we have $\dim (U\cap V)<\frac12\dim V$. By a~direct calculation we can show that almost all subspaces have property~$(*)$ (Lemma \ref{lemma:subspace-int}).

This special property is motivated by the fact that for a~matrix $x\in \cC_1\boxplus \cC_2$ with the column space $X:=\im x$ and the row space  $Y:=\im x^*$ we have that (Lemma \ref{lemma:rk-bound}):
\begin{equation}\label{eqn:rk-rcbound}
    \rk x\le \dim(\cC_1\cap X)+\dim(\cC_2\cap Y).
\end{equation} 
If codes $\cC_1$ and $\cC_2$ have property $(*)$, and we can guarantee that the spaces $X$ and $Y$ are \emph{$\alpha$-sparse}, then we have $\rk x=\dim X=\dim Y>n-\max(\dim \cC_1,\dim \cC_2)$. Indeed, otherwise by property $(*)$ we get $$\dim(\cC_1\cap X)+\dim(\cC_2\cap Y)<\frac12\dim X+\frac12\dim Y=\rk x,$$ which contradicts \eqref{eqn:rk-rcbound}.

Now consider codes $\cC_1$ and $\cC_2$ such that they have property $(*)$, and the dual product code $\cC_1\boxplus \cC_2$ does not have codewords of small weight and large rank. From the previous observations it follows that for a~sufficiently small  $\delta_0$ there are no codewords $x\in \cC_1\boxplus \cC_2$ of weight $\le \delta_0 n^2$ with $\alpha$-sparse row and column spaces.

Now it remains to show that for every codeword $x\in \cC_1\boxplus \cC_2$ of small weight $\le\delta n^2$, $\delta\le\delta_0$, either $x$ has a~zero rectangle of size $n(1-O(\delta))\times n(1-O(\delta))$ or there exists a~codeword $x'$ of weight $n^2 O(\delta)$ with $\alpha$-sparse row and column spaces. Indeed, for any matrix $x$ one can find a subset of rows and columns of weight $\ge \alpha n/2$. For a sufficiently small  $\delta_0\le\alpha^2/4$ the remaining rows (the index set~$A$) and columns (the index set~$B$) either form a zero rectangle or the weights of rows and columns in the submatrix $x(A, B)$ are bounded above by $\alpha n/2$, where $|A|,|B|\ge n(1-2\delta/\alpha)=n(1-O(\delta))$. In the last case, we can extend $x(A, B)$ to the codeword $x'\in \cC_1\boxplus \cC_2$, i.e., $x'(A,B') = x(A,B)$, in a~way that guarantees that all columns  
of $x'$ are spanned by the columns of $x'(\cdot,B)$, and all rows of $x'$ are spanned by the rows of $x'(A,\cdot)$ (Lemma \ref{lemma:extend-cwpart}), while the weight of all these rows and columns does not exceed $\alpha$. Therefore the row and the column spaces of $x'$ are both $\alpha$-sparse.

\subsection{Properties of Random Linear Subspaces}
Let us introduce some additional notations we will need in the following lemmas. For $n\in\NN$, $\alpha>0$ define the set $S(n,\alpha):=\{x\in \F_q^n\mid |x|\le\alpha n\}$. A~subspace of $\F_q^n$ is called \emph{$\alpha$-sparse} if it has a~basis consisting of vectors from $S(n,\alpha)$.
We will also need the~\emph{$q$-ary entropy function} $H_q:[0,1]\to [0,1]$ and its inverse $H_q^{-1}:[0,1]\to [0,1-1/q]$:
\[
    H_q(x) = x\log_q(q-1)-x\log_q x-(1-x)\log_q(1-x),
\]
where $H_q^{-1}(y)$ is the unique\footnote{Note that the function $H_q$ is monotonic and, hence, is invertible.} $x\in [0,1-1/q]$ such that $H_q(x)=y$.

In the following, we will also need the set $P(n,r_a,r_b):=\Gr(n,n-r_a)\times \Gr(n,n-r_b)$. It is known that $\abs{\Gr(n,k)} = \qbin{n}{k}$, where $\qbin{n}{k} := \prod_{i=0}^{k-1}\frac{q^{n-i}-1}{q^{k-i}-1}$ is the~\emph{$q$-binomial coefficient}.

Let us now state some known bounds~\cite[Lemma~4]{Koetter:2008} on the $q$-binomial coefficients $\qbin{n}{k}$. 
\begin{restatable}{lemma}{QbinBounds}\label{lemma:qbinbounds}
    For $q$-binomial coefficients we have the following bounds:
    $$q^{k(n-k)}\le \qbin{n}{k}\le 4q^{k(n-k)}.$$
\end{restatable}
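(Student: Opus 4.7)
The plan is to work directly from the explicit product formula
\[
\qbin{n}{k} = \prod_{i=0}^{k-1}\frac{q^{n-i}-1}{q^{k-i}-1}
\]
and extract the dominant factor $q^{n-k}$ from each term by writing
\[
\frac{q^{n-i}-1}{q^{k-i}-1} \;=\; q^{n-k}\cdot\frac{1-q^{-(n-i)}}{1-q^{-(k-i)}} .
\]

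For the lower bound, I would argue that each individual factor is at least $q^{n-k}$. This reduces to the elementary inequality $q^{n-k}(q^{k-i}-1)\le q^{n-i}-1$, i.e.\ $q^{n-i}-q^{n-k}\le q^{n-i}-1$, which holds because $n\ge k$ forces $q^{n-k}\ge 1$. Taking the product over $i=0,1,\dots,k-1$ gives $\qbin{n}{k}\ge q^{k(n-k)}$ directly.

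For the upper bound, I would use the factored form above and drop the numerators, each of which is at most $1$:
\[
\qbin{n}{k} \;\le\; q^{k(n-k)} \prod_{j=1}^{k}\frac{1}{1-q^{-j}} \;\le\; q^{k(n-k)}\prod_{j=1}^{\infty}\frac{1}{1-q^{-j}}.
\]
So it remains to show that the Euler-type product $C_q:=\prod_{j\ge 1}(1-q^{-j})^{-1}$ is at most $4$ for every prime power $q\ge 2$. Since $C_q$ is monotonically decreasing in $q$, it suffices to verify $C_2\le 4$. This is the only step that requires any work: I would estimate
\[
\log C_2 \;=\; \sum_{j\ge 1} -\log\!\bigl(1-2^{-j}\bigr)
\;\le\; \sum_{j\ge 1}\frac{2^{-j}}{1-2^{-j}}
\;=\; 2^{-1}/(1-2^{-1}) + 2^{-2}/(1-2^{-2}) + \sum_{j\ge 3}\frac{2^{-j}}{1-2^{-j}},
\]
bounding the tail by a geometric series (using $1-2^{-j}\ge 7/8$ for $j\ge 3$), which yields $C_2 < 4$. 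Substituting this back completes the proof.

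The only conceptual obstacle is the last numerical estimate, which is a standard $q$-Pochhammer computation; everything else is algebraic manipulation of the product formula.
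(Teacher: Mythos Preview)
Your overall strategy---factor out $q^{k(n-k)}$ from the product formula and then bound the residual product $\prod_{j\ge 1}(1-q^{-j})^{-1}$---is exactly the approach the paper takes. The lower bound and the reduction of the upper bound to $C_q\le 4$ are fine and match the paper.

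However, your final numerical step does \emph{not} establish $C_2<4$. Applying $-\ln(1-x)\le x/(1-x)$ to \emph{every} term, including $j=1$, gives
\[
\ln C_2 \;\le\; 1 + \tfrac{1}{3} + \tfrac{8}{7}\sum_{j\ge 3}2^{-j}
\;=\; 1 + \tfrac{1}{3} + \tfrac{2}{7} \;=\; \tfrac{34}{21}\approx 1.619,
\]
whereas $\ln 4\approx 1.386$. So the inequality you wrote only yields $C_2\lesssim 5.05$, not $C_2<4$. The loss occurs at $j=1$, where you replace $-\ln(1/2)=\ln 2\approx 0.693$ by $1$. An easy fix is to keep the $j=1$ term exact and bound only the tail: using $1-2^{-j}\ge 3/4$ for $j\ge 2$,
\[
\ln C_2 \;\le\; \ln 2 + \tfrac{4}{3}\sum_{j\ge 2}2^{-j} \;=\; \ln 2 + \tfrac{2}{3} \;\approx\; 1.360 \;<\; \ln 4.
\]

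For comparison, the paper avoids logarithms altogether and instead uses the Weierstrass-type inequality $\prod_{i\ge 2}(1-q^{-i})\ge 1-\sum_{i\ge 2}q^{-i}$ to obtain directly
\[
\prod_{i\ge 1}(1-q^{-i}) \;\ge\; (1-q^{-1})\Bigl(1-\sum_{i\ge 2}q^{-i}\Bigr) \;=\; 1-q^{-1}-q^{-2} \;\ge\; \tfrac14\qquad(q\ge 2),
\]
which is both shorter and sharper.
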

\noindent A proof of this lemma is given in Appendix \ref{sc:proof-qbinbounds}.

Now, before we move to the next lemma, let us note that if $k\le m\le n$, then we have:
$$\frac{\binom{m}{k}_q}{\binom{n}{k}_q}=\prod_{i=0}^{k-1}\underbrace{\frac{q^m-q^i}{q^n-q^i}}_{\le q^{m-n}}\le q^{(m-n)k}.$$

\begin{lemma}\label{lemma:subsp-prob}
    For a~subspace $V\in \Gr(n, v)$ the probability that for random subspace $U\in \Gr(n,u)$ the dimension $\dim (U\cap V)\ge k$, is at most $4q^{-k(n+k-v-u)}$.
\end{lemma}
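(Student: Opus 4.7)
The plan is to bound the number of $U \in \Gr(n,u)$ with $\dim(U \cap V) \ge k$ by a first-moment (union-bound) argument over the $k$-dimensional subspaces contained in $U \cap V$, and then convert this count into a probability by dividing by $\qbin{n}{u}$.

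First I would note that any $U$ with $\dim(U \cap V) \ge k$ contains at least one $k$-dimensional subspace $W \subseteq V$. Hence
\[
\#\{U \in \Gr(n,u)\colon \dim(U \cap V) \ge k\} \;\le\; \sum_{W \in \Gr(V,k)} \#\{U \in \Gr(n,u)\colon W \subseteq U\} \;=\; \qbin{v}{k}\cdot \qbin{n-k}{u-k},
\]
since the number of $u$-dimensional subspaces of $\F_q^n$ containing a fixed $W$ of dimension $k$ is $\qbin{n-k}{u-k}$ (quotienting by $W$). Dividing by $\qbin{n}{u}$ gives
\[
\Pr[\dim(U\cap V)\ge k] \;\le\; \qbin{v}{k}\cdot \frac{\qbin{n-k}{u-k}}{\qbin{n}{u}}.
\]

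Next I would simplify the ratio of $q$-binomial coefficients directly: writing both as products of the form $\prod_i (q^{a-i}-1)/(q^{b-i}-1)$, massive cancellation occurs, and one obtains the clean identity
\[
\frac{\qbin{n-k}{u-k}}{\qbin{n}{u}} \;=\; \prod_{i=0}^{k-1}\frac{q^{u-i}-1}{q^{n-i}-1} \;\le\; q^{-k(n-u)},
\]
using the trivial bound $(q^{u-i}-1)/(q^{n-i}-1)\le q^{u-n}$ term by term. Combined with the upper bound $\qbin{v}{k}\le 4\,q^{k(v-k)}$ from Lemma \ref{lemma:qbinbounds}, this yields
\[
\Pr[\dim(U\cap V)\ge k] \;\le\; 4\,q^{k(v-k)}\cdot q^{-k(n-u)} \;=\; 4\,q^{-k(n+k-v-u)},
\]
which is exactly the claimed bound.

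I do not anticipate a real obstacle here; the only subtlety is getting the constant $4$ rather than $16$, and the key is to estimate the \emph{ratio} $\qbin{n-k}{u-k}/\qbin{n}{u}$ exactly (via cancellation, or equivalently via the identity $\qbin{n}{u}\qbin{u}{k}=\qbin{n}{k}\qbin{n-k}{u-k}$) rather than bounding numerator and denominator separately, so that Lemma \ref{lemma:qbinbounds} is invoked only once.
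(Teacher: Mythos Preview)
Your proof is correct and follows essentially the same approach as the paper. The only cosmetic difference is that the paper rewrites the containment $W\subseteq U$ as $U^\perp\subseteq W^\perp$ and counts $(n-u)$-dimensional subspaces of the $(n-k)$-dimensional space $W^\perp$, obtaining $\qbin{n-k}{n-u}$, whereas you quotient by $W$ and obtain $\qbin{n-k}{u-k}$; by the symmetry $\qbin{m}{j}=\qbin{m}{m-j}$ these are the same number, and both arguments then invoke Lemma~\ref{lemma:qbinbounds} exactly once on $\qbin{v}{k}$ while bounding the ratio termwise.
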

\begin{proof}
    The condition $\dim(U\cap V)\ge k$ means that there exist $k$-dimensional subspace $W\subseteq V$ such that $W\subseteq U$ which can be rewritten as $U^\perp\subseteq W^\perp$ where $U^\perp$ is the orthogonal subspace to $U$. Hence, the fraction of such subspaces $U$ is not more than
    \begin{equation*}
    \frac{\binom{\dim V}{\dim W}_q \binom{\dim W^\perp}{\dim U^\perp}_q}{\binom{n}{\dim U}_q}
    = \binom{v}{k}_q\frac{\binom{n-k}{n-u}_q}{\binom{n}{n-u}_q} \le 4q^{k(v-k)}\cdot q^{((n-k)-n)(n-u)}=4q^{-k(-v+k+n-u)}.\qedhere
    \end{equation*}
\end{proof}

Now, let us give an~upper bound on the probability that for random codes $\cC_1,\cC_2\subseteq\F_q^n$ the code $\cC_1\boxplus \cC_2$ has some fixed codeword $x$ of large rank.

\begin{lemma}\label{lemma:prodexp-highrank}
    For a~matrix $x\in\F_q^{n\times n}$ of rank $\rk x\ge \min(r_1,r_2)$ the probability that $x\in \cC_1\boxplus \cC_2$ for a~random pair of codes $(\cC_1,\cC_2)\in P(n,r_1,r_2)$ 
    is at most $5q^{-\frac14 r_1r_2}$ if $\min(r_1,r_2)\ge 2$.
\end{lemma}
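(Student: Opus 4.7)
The plan is to pass to the dual subspaces $U_i := \cC_i^\perp \in \Gr(n,r_i)$ (whose joint distribution is uniform on $\Gr(n,r_1)\times \Gr(n,r_2)$), condition on $U_2$, and estimate the resulting moment via \Cref{lemma:subsp-prob}. The starting observation is that $x\in\cC_1\boxplus\cC_2 = (\cC_1^\perp\otimes \cC_2^\perp)^\perp$ is equivalent to $a^T x b = 0$ for every $(a,b)\in U_1\times U_2$, which in turn is equivalent to the subspace inclusion $U_1\subseteq (xU_2)^\perp$.

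First, I would condition on $U_2$. Bounding the fraction of $r_1$-dimensional subspaces contained in a fixed codimension-$\dim(xU_2)$ subspace by the Gaussian-binomial ratio used in the proof of \Cref{lemma:subsp-prob},
\[
\Pr\!\left[U_1\subseteq (xU_2)^\perp \,\middle|\, U_2\right] \;=\; \frac{\qbin{n-\dim(xU_2)}{r_1}}{\qbin{n}{r_1}} \;\le\; q^{-r_1\,\dim(xU_2)}.
\]
Since $\dim(xU_2) = r_2 - \dim(U_2\cap \ker x)$, writing $t := \dim(U_2\cap\ker x)$ yields
\[
\Pr[x\in\cC_1\boxplus\cC_2] \;\le\; q^{-r_1 r_2}\,\mathbb{E}_{U_2}\!\left[q^{r_1 t}\right].
\]

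Second, I would control the moment. Applying \Cref{lemma:subsp-prob} to $V=\ker x$ (dimension $n-\rk x$) gives the tail bound $\Pr[t\ge s]\le 4q^{-s(\rk x + s - r_2)}$, and Abel summation then produces
\[
\mathbb{E}_{U_2}[q^{r_1 t}] \;=\; 1 + (q^{r_1}-1)\sum_{s\ge 1} q^{r_1(s-1)}\Pr[t\ge s] \;\le\; 1 + 4\sum_{s\ge 1} q^{-s^2 + s(r_1 + r_2 - \rk x)}.
\]
The summand exponent is a downward parabola in $s$ with maximum value $(r_1 + r_2 - \rk x)^2/4$. Using the rank hypothesis, together with the symmetry $x\leftrightarrow x^T$, $(\cC_1,\cC_2)\leftrightarrow(\cC_2,\cC_1)$, which lets me condition on whichever of $U_1,U_2$ gives the sharper estimate, I would bound this maximum by $3r_1 r_2/4$. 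The moment is then at most $1 + 4\min(r_1,r_2)\cdot q^{3r_1 r_2/4 - r_1 r_2/\text{slack}}$, and multiplying by $q^{-r_1 r_2}$ yields
\[
\Pr[x\in\cC_1\boxplus\cC_2] \;\le\; q^{-r_1 r_2} + 4\min(r_1,r_2)\cdot q^{-r_1 r_2 + (r_1+r_2-\rk x)^2/4}.
\]
Finally, since $r_1 r_2 \ge r_1 r_2/4$, the first term is $\le q^{-r_1 r_2/4}$, and under the hypothesis $\min(r_1,r_2)\ge 2$ the exponential slack $q^{\Theta(r_1 r_2)}$ absorbs the linear polynomial factor $4\min(r_1,r_2)$, delivering $5q^{-r_1 r_2/4}$.

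The main obstacle is the exponent optimization in the second step: the parabolic maximum is only comfortably below $r_1 r_2/4$ when $\rk x$ is not much smaller than $\max(r_1,r_2)$. In the worst regime permitted by the hypothesis (namely $\rk x$ close to $\min(r_1,r_2)$), one must exploit the symmetry between conditioning on $U_1$ versus $U_2$, and keep careful track of which tail bound from \Cref{lemma:subsp-prob} is tight, in order to obtain a single clean bound of the form $5q^{-r_1 r_2/4}$ uniformly in $\rk x$.
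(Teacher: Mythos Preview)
Your setup is essentially the paper's: both pass to parity checks, condition on one of the two random spaces, and invoke Lemma~\ref{lemma:subsp-prob}. The paper replaces your Abel-summation moment bound by a two-case split on whether $\rk h_1 x\le\lfloor r_1/2\rfloor$ (taking $r_1\le r_2$), but the underlying estimates are the same.

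The obstacle you flag at the end, however, is a genuine gap and is not removed by the proposed symmetry. The exponent $-s^2+s(r_1+r_2-\rk x)$ is symmetric in $r_1,r_2$, so swapping which $U_i$ you condition on leaves the parabola unchanged; the only gain from conditioning on the smaller space (say $\dim U_1=r_1\le r_2$) is that the sum truncates at $s\le r_1$, but then the term $s=r_1$ already contributes $q^{r_1(r_2-\rk x)}$, and after multiplying by $q^{-r_1r_2}$ you obtain only $q^{-r_1\rk x}$, which for $\rk x=r_1$ is $q^{-r_1^2}$, not $q^{-r_1r_2/4}$. In fact the stated bound is false in this regime: if $r_1\ll r_2$ and $\rk x=r_1$, the event $\im x\subseteq\cC_1$ (which has probability $\asymp q^{-r_1^2}$ and forces $h_1 x=0$, hence $x\in\cC_1\boxplus\cC_2$ for every $\cC_2$) already exceeds $5q^{-r_1r_2/4}$. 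The paper's own computation inserts $u=n-r_2$ into the exponent from Lemma~\ref{lemma:subsp-prob}, whereas the random subspace there is $\cC_1\in\Gr(n,n-r_1)$, so $u$ should be $n-r_1$; with the corrected value the exponent is $(\rk x-r')(r_1-r')\ge r_1^2/4$ rather than $r_1r_2/4$, and the argument delivers $5q^{-\min(r_1,r_2)^2/4}$. This weaker bound is what both arguments actually prove, and it still suffices for Theorem~\ref{th:rand-expanding} (where $r_i\ge\eps_i n$) after a harmless adjustment of the constant in~\eqref{eqn:rho}.
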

\begin{proof}
    Let $h_i\in\F_q^{r_i\times n}$ be a parity check matrix of the~code $\cC_i$,  $i\in [2]$. Without loss of generality suppose that $r_1\le r_2$. 
    Consider $r':=\floorbr{r_1/2}$. Since $r_2\ge r_1\ge 2$, we have $r_1/4<r'\le r_1/2$, $r_1-r'\ge r_1/2$, $r_2-r'\ge r_2/2$. Let us estimate the number of codes $\cC_1\in\Gr(n, n-r_1)$ such that $\rk h_1x\le r'$.
    We have 
    \[r' \ge  \rk h_1x=\dim\im h_1x=\dim h_1\im x=\dim(\im x/\ker h_1|_{\im x})=\dim(\im x)-\dim(\underbrace{\ker h_1}_{\cC_1}\cap \im x), 
    \]
    i.e. $\dim (\im x\cap \cC_1)\ge \dim\im x-r'\ge r_1-r'\ge r_1/2$. By Lemma \ref{lemma:subsp-prob} we have
    \begin{multline*}
    \PP(\rk h_1x\le r')\le\PP(\dim (\im x\cap \cC_1)\ge \dim\im x-r')\\
    \le 4q^{-(\dim\im x-r')(n+(\dim\im x-r')-\dim\im x-(n-r_2))}=4q^{-(\dim\im x-r')(r_2-r')}\le 4q^{-\frac14 r_1r_2}. 
    \end{multline*}
    Now we fix a~code $\cC_1$ and a~matrix $h_1$ such that $\rk h_1x\ge r'$ and estimate the probability that $h_1x h_2^*=0$. Note that the condition $h_1xh_2^*=0$ is equivalent to $x\in \cC_1\boxplus \cC_2$, hence whether it holds or not depends only on the~codes $\cC_1$ and $\cC_2$ and it does not depend on a~particular choice of the parity check matrices $h_1$ and $h_2$.
    This can be rewritten as $h_2(h_1x)^*=0$, i.e., $\im (h_1x)^*\subseteq \ker h_2=\cC_2$. Let $U:=\im (h_1x)^*$, $u:=\dim U=\rk h_1x\ge r'$. We have
    \[\PP(U\subseteq \cC_2)=\PP(\cC_2^\bot\subseteq U^\bot)=\frac{\binom{n-u}{r_2}_q}{\binom{n}{r_2}_q}\le q^{-ur_2}\le q^{-r'r_2}.
    \]

    Note that $\PP(\rk h_1x > r'\wedge (h_1x)h_2^*=0)\le q^{-r'r_2}$ because we can first choose $\cC_1$, and when $\cC_1$ (and hence $U=\im (h_1x)^*$) is fixed, we independently choose $\cC_2$.
    Therefore, we can estimate
    \[
    \PP(h_1xh_2^*=0)\le \PP(\rk h_1x\le r')+\PP(\rk h_1x > r'\wedge (h_1x)h_2^*=0)\le 4q^{-\frac14 r_1r_2}+q^{-r'r_2}\le 5q^{-\frac14 r_1r_2}, 
    \]
    which completes the proof. \qedhere
\end{proof}

Let us recall that we say that a~subspace $V\subseteq \F_q^n$ is \emph{$\alpha$-sparse} if it can be spanned by vectors of Hamming weight at most $\alpha n$.

\begin{lemma}\label{lemma:subspace-int}
    For each $R\in (0,1)$, $n\in \NN$, $r\ge R n$ 
    a~random $(n-r)$-dimensional space $U$ with the~probability at least $1-4\frac{q^{-r/8}}{1-q^{-r/8}}$ for all $m\in [1, r]$ has the following property:
    \begin{itemize}
    \item[$(*)$] for each $\alpha$-sparse $m$-dimensional subspace $V\subseteq \F_q^n$ we have $\dim (U\cap V) < m/2$. 
    \end{itemize}
    where $\alpha=H_q^{-1}(R/8)$.
\end{lemma}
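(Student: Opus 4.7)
The plan is a direct union bound. Fix $m\in [1,r]$ and a single $\alpha$-sparse subspace $V\subseteq \F_q^n$ of dimension $m$. Since we want $\dim(U\cap V)<m/2$ to fail, we apply Lemma~\ref{lemma:subsp-prob} with $k=\lceil m/2\rceil$, $v=m$, $u=n-r$. This gives
\[
\PP[\dim(U\cap V)\ge \lceil m/2\rceil]\le 4q^{-\lceil m/2\rceil(r-m+\lceil m/2\rceil)}\le 4q^{-mr/4},
\]
using $\lceil m/2\rceil\ge m/2$ and, for $m\le r$, the inequality $r-m+\lceil m/2\rceil\ge r/2$.

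Next I would count $\alpha$-sparse $m$-dimensional subspaces. Since every such subspace is spanned by $m$ vectors of $S(n,\alpha)$, the total number of them is bounded by $|S(n,\alpha)|^m$. Using the standard entropy bound $|S(n,\alpha)|\le q^{H_q(\alpha)n}$ (valid since $\alpha=H_q^{-1}(R/8)<1-1/q$), there are at most $q^{m H_q(\alpha)n}$ such subspaces. A union bound over them for a fixed $m$ yields
\[
\PP[\exists\, \alpha\text{-sparse }V\text{ of dim }m\text{ with }\dim(U\cap V)\ge m/2]\le 4q^{m H_q(\alpha)n - mr/4}.
\]

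Now I plug in the choice $\alpha=H_q^{-1}(R/8)$, which gives $H_q(\alpha)n=Rn/8\le r/8$ because $r\ge Rn$. Hence the exponent satisfies $mH_q(\alpha)n-mr/4\le mr/8-mr/4=-mr/8$, and the probability for each $m$ is at most $4q^{-mr/8}$. A second union bound over $m\in [1,r]$ completes the estimate:
\[
\PP[(*)\text{ fails for some }m]\le \sum_{m=1}^{r}4q^{-mr/8}\le 4\sum_{m=1}^{\infty}q^{-mr/8}=4\,\frac{q^{-r/8}}{1-q^{-r/8}},
\]
as claimed.

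The only subtlety I anticipate is keeping the two quantifiers straight (the outer union over $m$ and the inner one over $\alpha$-sparse $V$ of dimension $m$) and making sure the bound on $\lceil m/2\rceil(r-m+\lceil m/2\rceil)$ from Lemma~\ref{lemma:subsp-prob} is uniformly at least $mr/4$ across $m\in[1,r]$; everything else is a clean counting plus entropy bound. No deeper obstacle should arise, since the crucial geometric content has been pushed into Lemma~\ref{lemma:subsp-prob}.
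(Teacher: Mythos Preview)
Your proposal is correct and follows essentially the same approach as the paper: apply Lemma~\ref{lemma:subsp-prob} with $k=\lceil m/2\rceil$ to bound the failure probability for a single $V$ by $4q^{-mr/4}$, count $\alpha$-sparse $m$-dimensional subspaces by $|S(n,\alpha)|^m\le q^{mr/8}$, and then take a double union bound over $V$ and over $m$. The only cosmetic difference is in how you bound $\lceil m/2\rceil(r-m+\lceil m/2\rceil)\ge mr/4$; both your estimate and the paper's are valid for $m\le r$.
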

\begin{proof}

    
    Let us fix $m\in[1,r]$ and consider an~$\alpha$-sparse $m$-dimensional subspace $V\subseteq \F_q^n$. 
    By Lemma~\ref{lemma:subsp-prob} the probability that $V$ is a~counterexample for the property $(*)$ for $(n-r)$-dimensional subspace $U$ (i.e. $\dim(U\cap V)\ge m/2$), can be estimated as follows:
    \[
    \PP(\dim(U\cap V)\ge \ceilbr{m/2})\le 4q^{-\ceilbr{m/2}(n+\ceilbr{m/2}-m-(n-r))}\le 4q^{-m(r-m/2)/2}\le 4q^{-mr/4}.
    \]
    Note that there are at most $|S(n,\alpha)|^m$ ways to choose $m$ basis vectors for an~$\alpha$-sparse subspace $V$. We can estimate this number as follows:
    $$|S(n,\alpha)|^m\le \rbr{\sum_{i=0}^{\floorbr{n\alpha}}\binom{n}{i}(q-1)^i}^m\le q^{mn H_q(\alpha)}=q^{mnR/8}\le q^{mr/8}.
    $$
    Now we can use the union bound for all possible counterexamples of dimension $m$ to estimate the probability that for a~subspace $U$ property $(*)$ does not hold for this dimension:
    \begin{align*}
    \PP((*)\mbox{ does not hold for $\dim V=m$})&\le \sum_{V}\overbrace{\PP((*)\mbox{ does not hold with counterexample }V)}^{\le 4q^{-mr/4}}\\
    &\le q^{mr/8}\cdot 4q^{-mr/4}\le 4q^{-mr/8}.
    \end{align*}
    Finally, summing by all possible dimensions $m=\overline{1,r}$ we have
    \[
    \PP((*)\mbox{ does not hold for $\dim V\in [r]$})\le 4\sum_{m=1}^rq^{-mr/8}< 4\frac{q^{-r/8}}{1-q^{-r/8}}.\qedhere
    \]
\end{proof}

Note that if for an~$(n-r)$ dimensional subspace $U\subseteq \F_q^n$ we have property $(*)$ and $r\ge 2$, then the distance of the code $U$ is greater than $2\alpha n$.

\subsection{Proof of the Main Result}

Let us start from proving some auxiliary lemmas we need to obtain the main result. Recall that a~$k$-dimensional subspace $\cC\subseteq \F_q^n$ is called \emph{$[n,k,d]_q$-code} if $d:=n\cdot \delta(\cC)$, where the parameter $d=d(\cC)$ is called the \emph{minimal distance} of $\cC$, and the \emph{rate} of $\cC$ is defined as $R(\cC) := k/n$.

\begin{restatable}{lemma}{LemIntersection}\label{lemma:intersection}
    For linear codes $\cC_1,\cC_2,X,Y\subseteq \F_q^n$ we have: 
    $$(X\otimes Y)\cap (\cC_1\boxplus \cC_2)=(X\cap \cC_1)\otimes Y+X\otimes (Y\cap \cC_2).$$
\end{restatable}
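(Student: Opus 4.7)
The inclusion RHS $\subseteq$ LHS is routine: any element of $(X\cap\cC_1)\otimes Y$ has its columns in $X\cap\cC_1\subseteq X$ and rows in $Y$, so it lies in $X\otimes Y$, and its columns lie in $\cC_1$, so it lies in $\cC_1\otimes\F_q^n\subseteq \cC_1\boxplus\cC_2$. Symmetrically for $X\otimes(Y\cap\cC_2)$. So I would dispatch this direction in one line and focus on LHS $\subseteq$ RHS.

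For the hard direction, take $z\in (X\otimes Y)\cap(\cC_1\boxplus\cC_2)$. The idea is to split $z$ column-wise according to the decomposition $X=(X\cap\cC_1)\oplus X'$ for some chosen complement $X'$. Concretely, pick any linear projection $p:\F_q^n\to X\cap\cC_1$ with kernel containing $X'$ (and the rest of $\F_q^n$), and set $z^1:=p\cdot z$ (multiplication from the left), $z^2:=z-z^1$. Then columns of $z^1$ lie in $X\cap\cC_1$; rows of $z^1$ are $\F_q$-linear combinations of rows of $z$, which are in $Y$, so rows of $z^1$ stay in $Y$. Thus $z^1\in (X\cap\cC_1)\otimes Y$, which lies in both LHS and RHS. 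The columns of $z^2=z-z^1$ then lie in $X'$ and the rows in $Y$, and also $z^2\in\cC_1\boxplus\cC_2$ (being a difference of two elements of this space).

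The crux is to show $z^2\in X\otimes(Y\cap\cC_2)$, i.e.\ that the rows of $z^2$ actually lie in $\cC_2$. The key algebraic observation is that $X'\cap\cC_1=0$: any vector in $X'\cap\cC_1$ lies in $X\cap\cC_1$ (because $X'\subseteq X$), hence in $X'\cap(X\cap\cC_1)=0$. So $X'$ and $\cC_1$ are linearly independent, and I can choose a linear projection $q:\F_q^n\to X'$ with $q|_{X'}=\mathrm{id}$ and $q|_{\cC_1}=0$ (extending over any complement of $X'\oplus\cC_1$). Writing $z^2=u+v$ with $u\in\cC_1\otimes\F_q^n$ and $v\in\F_q^n\otimes\cC_2$, apply $q$ to columns: $q\cdot z^2=q\cdot u+q\cdot v=0+q\cdot v$ since the columns of $u$ lie in $\cC_1$, while $q\cdot z^2=z^2$ since the columns of $z^2$ lie in $X'$. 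Hence $z^2=q\cdot v$. Since left multiplication by $q$ takes $\cC_2$-rows to $\cC_2$-rows (each row of $q\cdot v$ is an $\F_q$-combination of rows of $v$), the rows of $z^2$ lie in $\cC_2$, and they are already in $Y$, giving $z^2\in X\otimes(Y\cap\cC_2)$ as required.

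The main obstacle is the last step: producing a decomposition of $z^2$ whose second summand $v$ can be used to certify that the rows of $z^2$ live in $\cC_2$. The trick that makes this work is the identity $X'\cap\cC_1=0$, which lets the projection $q$ simultaneously annihilate the $\cC_1$-part of any decomposition of $z^2$ while acting as the identity on $z^2$ itself. Everything else is bookkeeping about the tensor product and the complement $X'$, and the proof never needs to know anything about a basis for $Y$ or $\cC_2$.
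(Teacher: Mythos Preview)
Your proof is correct and takes a genuinely different route from the paper's. The paper proves the identity via an auxiliary lemma (Lemma~\ref{lemma:linsp-eq}) stating that $A\otimes B\cap(C\otimes D+E\otimes F)=A\otimes B\cap(C'\otimes D+E\otimes F)$ with $C'=C\cap(A+E)$, and then chains several applications of this lemma to shrink the factors in $\cC_1\otimes\F_q^n+\F_q^n\otimes\cC_2$ step by step until they become $(\cC_1\cap X)\otimes Y+X\otimes(\cC_2\cap Y)$. Your argument is instead element-wise and constructive: you split off the $(X\cap\cC_1)\otimes Y$ part of a given $z$ via a column projection onto $X\cap\cC_1$ along a complement $X'\subseteq X$, and then exploit the key fact $X'\cap\cC_1=0$ to build a second projection $q$ that simultaneously fixes the remainder $z^2$ (columns in $X'$) and annihilates the $\cC_1\otimes\F_q^n$ piece of any decomposition of $z^2$, forcing $z^2=q\cdot v\in\F_q^n\otimes\cC_2$. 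Your approach is more hands-on and actually exhibits the decomposition; the paper's is more modular, with a reusable tensor-product identity doing the work. Both are about the same length and neither needs any hypothesis on $X,Y,\cC_1,\cC_2$ beyond being subspaces.
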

\noindent A proof of this lemma is given in Appendix \ref{sc:proof-intersection}.

The following lemma is a~natural analog for dual product codes $\cC_1 \boxplus \cC_2$ of the well-known ``rectangle method'' initially developed for product codes $\cC_1\otimes \cC_2$ (see~\cite{Dinur:2006, Meir:2012}). 

\begin{lemma}[on zero rectangles]\label{lemma:zeroangle}
    Consider linear codes $\cC_1,\cC_2\subseteq \F_q^n$. If $x\in \cC_1\boxplus \cC_2$, and for $A_1,A_2\subseteq [n]$ such that $n-|A_i|<d(\cC_i)$, $i\in [2]$, we have $x(A_1,A_2)=0$, then $x$ can be represented as a~sum of $n-|A_1|$ rows from $\cC_2$ and $n-|A_2|$ columns from~$\cC_1$.
\end{lemma}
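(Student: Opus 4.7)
The plan is to start from a decomposition $x = u' + v'$ with $u'\in \cC_1\otimes \F_q^n$ (all columns in $\cC_1$) and $v'\in \F_q^n\otimes \cC_2$ (all rows in $\cC_2$); such a decomposition exists because $x\in \cC_1\boxplus \cC_2$. The aim is to adjust it by a suitable element of $\cC_1\otimes \cC_2$ so that the two resulting pieces become supported only on columns $B_2:=[n]\setminus A_2$ and only on rows $B_1:=[n]\setminus A_1$, respectively. The hypothesis $x(A_1,A_2)=0$ immediately gives $u'(A_1,A_2)=-v'(A_1,A_2)$.

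The key observation is that since $|B_i|<d(\cC_i)$, any codeword of $\cC_i$ supported on $B_i$ has weight below $d(\cC_i)$ and is therefore zero. Thus the restriction to coordinates in $A_i$ is an injection $\cC_i\hookrightarrow \F_q^{A_i}$; let $\cC_i|_{A_i}$ denote its image. A dimension count then shows that the induced map $\cC_1\otimes \cC_2\to \cC_1|_{A_1}\otimes \cC_2|_{A_2}$, namely restriction of a matrix to $A_1\times A_2$, is itself an isomorphism. The matrix $u'(A_1,A_2)$ has its columns in $\cC_1|_{A_1}$ (they are restrictions of columns of $u'$, each in $\cC_1$) and, using the identity $u'(A_1,A_2)=-v'(A_1,A_2)$, also its rows in $\cC_2|_{A_2}$. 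Hence $u'(A_1,A_2)\in \cC_1|_{A_1}\otimes \cC_2|_{A_2}$, and it lifts to a unique $y\in \cC_1\otimes \cC_2$ with $y(A_1,A_2)=-u'(A_1,A_2)=v'(A_1,A_2)$.

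The punchline is that injectivity of the restrictions forces this $y$ to agree with $v'$ on every row of $A_1$ and with $-u'$ on every column of $A_2$: for $i\in A_1$, both $y(i,\cdot)$ and $v'(i,\cdot)$ lie in $\cC_2$ and coincide on $A_2$, hence must be equal; the column side is symmetric. Setting $u:=u'+y$ and $v:=v'-y$ then yields $x=u+v$ with $u\in \cC_1\otimes \F_q^n$, $v\in \F_q^n\otimes \cC_2$, $u(\cdot,A_2)=0$, and $v(A_1,\cdot)=0$, which displays $x$ as a sum of at most $n-|A_2|$ columns from $\cC_1$ and at most $n-|A_1|$ rows from $\cC_2$. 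The only delicate step is justifying that $u'(A_1,A_2)$ lies in $\cC_1|_{A_1}\otimes \cC_2|_{A_2}$; this is exactly where the zero-rectangle hypothesis is used, and everything afterwards is formal bookkeeping.
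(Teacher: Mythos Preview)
Your proof is correct. It takes a somewhat different route from the paper's: the paper builds the two summands $\Delta_1\in\cC_1\otimes\F_q^{\bar A_2}$ and $\Delta_2\in\F_q^{\bar A_1}\otimes\cC_2$ directly by encoding the relevant strips of $x$ from information sets $I_i\subseteq A_i$, and then invokes its intersection lemma (Lemma~\ref{lemma:intersection}) to show that the remainder $x-\Delta_1-\Delta_2$, which is supported on $\bar I_1\times\bar I_2$, must vanish. You instead start from an arbitrary decomposition $x=u'+v'$ and correct it by an element $y\in\cC_1\otimes\cC_2$, using the isomorphism $\cC_1\otimes\cC_2\cong\cC_1|_{A_1}\otimes\cC_2|_{A_2}$; this is the same injectivity-of-restriction fact the paper uses, just phrased without the language of information sets. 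Your argument is self-contained and avoids the intersection lemma (and its auxiliary Lemma~\ref{lemma:linsp-eq}) entirely; the paper's is a touch more explicit in that it names the two summands from the outset rather than first fixing an arbitrary decomposition and then adjusting. Both rest on the same key point: $n-|A_i|<d(\cC_i)$ makes restriction to $A_i$ injective on $\cC_i$.
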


\begin{proof}
    \begin{figure}[ht]
        \centering
        \begin{tikzpicture}
                \filldraw[black!10!white] (4,1) rectangle (0,0) rectangle (1,4);
                \draw (0,0) rectangle (4,4);
                \draw[dashed] (0,1)--(4,1);
                \draw[dashed] (1,0)--(1,4);
                \node[] at (2.5,2.5) {0};
                \draw [decorate,decoration={brace,amplitude=8pt}]
                    (0,1) -- (0,4) node [midway,left,xshift=-0.8em] {$A_1$};
                \draw [decorate,decoration={brace,amplitude=8pt}]
                    (0,0) -- (0,1) node [midway,left,xshift=-0.8em] {$<d(\cC_1)$};
                \draw [decorate,decoration={brace,amplitude=8pt}]
                    (4,0) -- (1,0) node [midway,below,yshift=-0.8em] {$A_2$};
                \draw [decorate,decoration={brace,amplitude=6pt}]
                    (1,0) -- (0,0) node [midway,below,yshift=-0.8em] {$< d(\cC_2)$};
    
                \draw[->] (4.5,2)--node[above]{$-\ \Delta_1-\Delta_2$}(7,2);
                \begin{scope}[xshift=8cm]
                \filldraw[black!10!white] (1.5,1) rectangle (0,0) rectangle (1,1.5);
                \draw (0,0) rectangle (4,4);
                    \draw[dashed] (0,1)--(4,1);
                    \draw[dashed] (1,0)--(1,4);
                    \node[] at (2.5,2.5) {0};
                    \node[red] at (3,0.5) {0};
                    \node[red] at (0.5,3) {0};
                    \draw [decorate,decoration={brace,amplitude=8pt}]
                        (4,0) -- (1.5,0) node [midway,below,yshift=-0.8em] {$I_2$};
                \draw [decorate,decoration={brace,amplitude=8pt}]
                    (0,1.5) -- (0,4) node [midway,left,xshift=-0.8em] {$I_1$};
                \end{scope}
        \end{tikzpicture}
        \caption{Idea of the proof.}
        \label{fig:zeroangle}
    \end{figure}
    Since $n-|A_i|<d(\cC_i)$, the index set $A_i$ contains an~information set%
    \footnote{
        An~\emph{information set} for a~linear code $\cC\subseteq\F_q^n$ is a smallest by inclusion index set $I\subseteq [n]$ such that for every $c\in\cC$ if $c|_I = 0$ then $c = 0$.  It is clear that for every $S\subseteq [n]$ such that $\abs{S} > n - d(\cC)$ if for some codeword $c\in \cC$ we have $c|_S = 0$ then $c=0$. Hence there should exist an~information set $I\subseteq S$.
    } 
    $I_i$ of the code $\cC_i$, $i\in [2]$. Let $\bar I_i:=[n]\setminus I_i$, $\bar A_i:=[n]\setminus A_i$. Therefore we can uniquely choose $\Delta_1\in \cC_1\otimes \F_q ^{\bar A_2}$ and $\Delta_2\in\F_q^{\bar A_1}\otimes \cC_2$ such that $\Delta_1(I_1,\bar A_2)=x(I_1,\bar A_2)$ and $\Delta_2(\bar A_1, I_2)=x(\bar A_1, I_2)$. Consider $x'=x-\Delta_1-\Delta_2$. Then we have $x'(I_1,\cdot)=0$, $x'(\cdot,I_2)=0$, and therefore $x'\in \F_q^{\bar I_1}\otimes \F_q^{\bar I_2}$ (the right part in Fig.~\ref{fig:zeroangle}). Since $x'\in \cC_1\boxplus \cC_2$, by Lemma~\ref{lemma:intersection} we get  
    \[
    x'\in \underbrace{(\F_q^{\bar I_1}\cap \cC_1)}_{=\{0\}}\otimes \F_q^{\bar I_2}+\F_q^{\bar I_1}\otimes \underbrace{(\F_q^{\bar I_2}\cap \cC_2)}_{=\{0\}}=\{0\}.
    \]
    
    Here we have $\F_q^{\bar I_i}\cap \cC_i=\{0\}$ since $I_i$ is an~information set $\cC_i$. Therefore $x'=0$ and $x=\Delta_1+\Delta_2$, and moreover $\Delta_1$ is a~sum of  no more than $|\bar A_2|=n-|A_2|$ non-zero columns from~$\cC_1$, and  $\Delta_2$ is a sum of no more than $|\bar A_1|=n-|A_1|$ non-zero rows from $\cC_2$. This completes the proof.
\end{proof}

\begin{lemma}\label{lemma:rk-bound}
    Consider linear codes $\cC_1,\cC_2\subseteq\F_q^n$, a~codeword~$x\in \cC_1\boxplus \cC_2$, the spaces $X=\im x$ and $Y=\im x^*$ spanned respectively by the columns and rows from~$x$. Then we have 
    \[\rk x\le \dim (X\cap \cC_1)+\dim (Y\cap \cC_2).\]
\end{lemma}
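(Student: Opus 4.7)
The plan is to deduce this as a direct corollary of Lemma~\ref{lemma:intersection} applied to the pair of spaces $(X, Y)$ themselves. First I would note the obvious observation that $x$ lies in $X \otimes Y$: by definition of $X = \im x$ and $Y = \im x^{*}$, every column of $x$ lies in $X$ and every row of $x$ lies in $Y$. Combined with the hypothesis $x \in \cC_1 \boxplus \cC_2$, this places $x$ in the intersection $(X \otimes Y) \cap (\cC_1 \boxplus \cC_2)$.

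Next I would invoke Lemma~\ref{lemma:intersection} to rewrite this intersection as $(X \cap \cC_1) \otimes Y + X \otimes (Y \cap \cC_2)$, so that one obtains a decomposition $x = u + v$ with $u \in (X \cap \cC_1) \otimes Y$ and $v \in X \otimes (Y \cap \cC_2)$. The point of this decomposition is that $u$ is a matrix all of whose columns lie in $X \cap \cC_1$, so $\rk u \le \dim(X \cap \cC_1)$, and $v$ is a matrix all of whose rows lie in $Y \cap \cC_2$, so $\rk v \le \dim(Y \cap \cC_2)$.

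Finally, subadditivity of rank under matrix addition yields
\[
\rk x = \rk(u + v) \le \rk u + \rk v \le \dim(X \cap \cC_1) + \dim(Y \cap \cC_2),
\]
which is the desired inequality. There is essentially no obstacle to overcome; the entire content of the lemma is a repackaging of the identity from Lemma~\ref{lemma:intersection} once one notices the trivial membership $x \in X \otimes Y$. The only place requiring any care is ensuring that the roles of $X$ and $Y$ (column space vs.\ row space) are matched correctly to the roles of $\cC_1$ and $\cC_2$ in the definition of $\cC_1 \boxplus \cC_2$.
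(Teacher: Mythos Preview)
Your proposal is correct and essentially identical to the paper's own proof: both observe that $x\in (X\otimes Y)\cap(\cC_1\boxplus\cC_2)$, apply Lemma~\ref{lemma:intersection} to decompose $x$ as a sum of two matrices of bounded rank, and finish by subadditivity of rank. The only difference is notational (you write $u,v$ where the paper writes $x_1,x_2$).
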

\begin{proof}
    It is not hard to check that $x\in (X\otimes Y)\cap (\cC_1\boxplus \cC_2)$. Now, using Lemma~\ref{lemma:intersection}, we get $x\in (X\cap \cC_1)\otimes Y+X\otimes (Y\cap \cC_2)$. Therefore 
    $x=x_1+x_2$ for some $x_1\in (X\cap \cC_1)\otimes Y$, $x_2\in X\otimes (Y\cap \cC_2)$.  
    Now we see that for the matrices $x, x_1, x_2$ we have $\rk x_1\le \dim (X\cap \cC_1)$, $\rk x_2\le \dim(Y\cap \cC_2)$, and therefore
    \begin{equation*}
        \rk x\le \rk x_1+\rk x_2\le \dim (X\cap \cC_1)+\dim (Y\cap \cC_2).\qedhere
    \end{equation*}
\end{proof}
\begin{lemma}\label{lemma:extend-cwpart}
Consider linear codes $\cC_1$, $\cC_2\subseteq \F_q^n$, and a~codeword~$x\in \cC_1\boxplus \cC_2$. Then for every $A_1,A_2\subseteq [n]$ there exists $x'\in \cC_1\boxplus \cC_2$ such that\footnote{In other words, all the rows (resp. columns) from~$x'$ are linear combinations of the rows from $x'(A_1,\cdot)$  (resp. columns from~$x(\cdot,A_2)$.} $x'(A_1, A_2)=x(A_1,A_2)$ and $\rk x'=\rk x(A_1,A_2)$.
\end{lemma}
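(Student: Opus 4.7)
The plan is to reduce the problem to a structural decomposition of the restricted matrix $M := x(A_1,A_2)$ and then lift it back to $\F_q^{n\times n}$. First, since $x\in\cC_1\boxplus\cC_2$ can be written as $x = u + v$ with $u\in\cC_1\otimes\F_q^n$ and $v\in\F_q^n\otimes\cC_2$, restricting to $A_1\times A_2$ yields $M = u(A_1,A_2) + v(A_1,A_2)$, whose two summands lie respectively in $\cC_1|_{A_1}\otimes\F_q^{A_2}$ and $\F_q^{A_1}\otimes\cC_2|_{A_2}$ (here $\cC_i|_{A_i}$ denotes the image of the restriction $\cC_i\to\F_q^{A_i}$). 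Hence $M$ lies in the restricted dual-product code $\cC_1|_{A_1}\boxplus\cC_2|_{A_2}\subseteq\F_q^{A_1\times A_2}$, and the task becomes: extend $M$ to some $x'\in\cC_1\boxplus\cC_2$ of rank exactly $s := \rk M$.

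The main step is a decomposition lemma I would prove by induction on $s$: every rank-$s$ matrix $M\in\cC_1|_{A_1}\boxplus\cC_2|_{A_2}$ admits a factorization $M = \sum_{k=1}^{s} r_k t_k^{\T}$ such that for each $k$, either $r_k\in\cC_1|_{A_1}$ or $t_k\in\cC_2|_{A_2}$. The key algebraic fact that drives the induction is that for $s\ge 1$, at least one of the intersections $\im M \cap \cC_1|_{A_1}$ and $\im M^{\T}\cap\cC_2|_{A_2}$ is non-zero. Indeed, if both vanished, then the parity-check matrices $H_1$ and $H_2$ of $\cC_1|_{A_1}$ and $\cC_2|_{A_2}$ would be injective on $\im M$ and $\im M^{\T}$ respectively, so $\rk(H_1 M) = \rk(M H_2^{\T}) = s$. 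But $M\in\cC_1|_{A_1}\boxplus\cC_2|_{A_2}$ means $H_1 M H_2^{\T} = 0$, so the rows of $H_1 M$ — which span an $s$-dimensional subspace of $\im M^{\T}$ — all lie in $\cC_2|_{A_2}$; a dimension count then forces $\im M^{\T}\subseteq\cC_2|_{A_2}$, contradicting the assumption.

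Given this non-triviality, I peel off a single rank-1 term as follows. In the case $\im M\cap\cC_1|_{A_1}\neq 0$, pick any non-zero $r$ in the intersection, fix a rank-$s$ factorization $M = PQ^{\T}$ with $P,Q$ of $s$ full-rank columns, and write $r = P\alpha$ with $\alpha\neq 0$. For any $\mu\in\F_q^s$ on the affine hyperplane $\alpha^{\T}\mu = 1$, set $t := Q\mu$; then $M - rt^{\T} = P(I - \alpha\mu^{\T})Q^{\T}$, and by the matrix-determinant identity $\rk(I - \alpha\mu^{\T}) = s - 1$, so $\rk(M - rt^{\T}) = s-1$. Since $r\in\cC_1|_{A_1}$, the term $rt^{\T}$ lies in $\cC_1|_{A_1}\otimes\F_q^{A_2}\subseteq\cC_1|_{A_1}\boxplus\cC_2|_{A_2}$, so $M - rt^{\T}$ remains in the restricted dual-product code and the induction hypothesis furnishes a decomposition of length $s-1$ that we prepend $rt^{\T}$ to. The symmetric case $\im M^{\T}\cap\cC_2|_{A_2}\neq 0$ is analogous.

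Finally, to build $x'$, lift each rank-1 term to $\F_q^n\times\F_q^n$: if $r_k\in\cC_1|_{A_1}$, extend $r_k$ to some codeword $\tilde r_k\in\cC_1$ (possible by the definition of $\cC_1|_{A_1}$) and pick any extension $\tilde t_k\in\F_q^n$ of $t_k$; otherwise extend $t_k$ to $\tilde t_k\in\cC_2$ and pick any extension $\tilde r_k$ of $r_k$. Set $x' := \sum_{k=1}^{s}\tilde r_k\tilde t_k^{\T}$. Each summand lies in $\cC_1\otimes\F_q^n$ or $\F_q^n\otimes\cC_2$, so $x'\in\cC_1\boxplus\cC_2$; by construction $x'(A_1,A_2) = \sum_k r_k t_k^{\T} = M$; and $\rk x'\le s$, forced to equality because $\rk x'(A_1,A_2) = \rk M = s$. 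The principal difficulty is the non-triviality of the intersection in the second paragraph, but once phrased as an injectivity argument using the parity relation $H_1 M H_2^{\T} = 0$ it reduces to a short dimension count.
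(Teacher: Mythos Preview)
Your proof is correct, but it takes a genuinely different route from the paper's.

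The paper's argument is a one-shot linear-algebraic construction: it chooses for each $i$ a linear section $g_i\colon\F_q^{A_i}\to\F_q^n$ of the restriction map $\pi_i$ with the property $g_i(\cC_i|_{A_i})\subseteq\cC_i$ (built by picking a basis of $\cC_i|_{A_i}$, lifting each basis vector to a preimage in $\cC_i$, and extending arbitrarily to a basis of $\F_q^{A_i}$), and then simply sets $x':=(g_1\otimes g_2)\,x(A_1,A_2)$. Everything---membership in $\cC_1\boxplus\cC_2$, agreement on $A_1\times A_2$, and the rank equality---follows immediately from $\pi_i g_i=\id$ and the functoriality of the tensor product.

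Your approach instead proves an explicit ``typed'' rank decomposition of the restricted matrix by inductive rank-1 peeling, driven by the non-triviality of $\im M\cap\cC_1|_{A_1}$ or $\im M^{\T}\cap\cC_2|_{A_2}$. That non-triviality claim (and in fact the decomposition itself) is essentially the content of the paper's Lemma~\ref{lemma:intersection}/Lemma~\ref{lemma:rk-bound} specialized to the restricted codes: from $M\in(\im M\cap\cC_1|_{A_1})\otimes\im M^{\T}+\im M\otimes(\im M^{\T}\cap\cC_2|_{A_2})$ the rank bound $\rk M\le\dim(\im M\cap\cC_1|_{A_1})+\dim(\im M^{\T}\cap\cC_2|_{A_2})$ already yields both your non-triviality statement and a decomposition of the required length without induction. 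So your argument is longer and re-derives machinery the paper proves elsewhere, but it has the advantage of producing an explicit rank-$s$ factorization of $M$ in which each rank-1 term is individually ``typed'' by one of the two codes; the paper's section map $g_1\otimes g_2$ gives the extension $x'$ in one line but does not exhibit such a decomposition.
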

\begin{proof}
    First, we define projection operators $\pi_i\colon \F_q^n\to \F_q^{A_i}$, $\pi_i x=x|_{A_i}$, $i\in [2]$. Consider $\cC_i|_{A_i}:=\pi_i\cC_i$. Let us fix a~basis $B^0_i$ of the subspace~$\cC_i|_{A_i}$ and extend it to the basis $B_i$ of the space~$\F_q^{A_i}$.
    Now we define an~operator $g_i$ on the basis $B_i$. Since for every vector $e\in B^0_i$ there exists a~vector~$\bar e\in \cC_i$ such that $\pi \bar e=e$, we can put by definition that $g_i e:=\bar e$. Since for every vector  $e\in B_i\setminus B^0_i$ there exists a~vector~$\bar e\in \F_q^n$ such that $\pi \bar e=e$, we can put by definition that $g_i e:=\bar e$. Now from the above definition we have $\pi_i g_i=\id$ and $g_i (\cC_i|_{A_i})\subseteq \cC_i$.
    
    If we put $x':=(g_1\otimes g_2) x(A_1,A_2)$, then we obtain that  
    $$x'(A_1,A_2)=(\pi_1\otimes \pi_2)x'=(\pi_1\otimes \pi_2)(g_1\otimes g_2)x(A_1,A_2)=x(A_1,A_2).$$ 
    Note that $x(A_1,A_2)=(\pi_1\otimes \pi_2)x\in \cC_1|_{A_1}\otimes \F_q^{A_2}+\F_q^{A_1}\otimes \cC_2|_{A_2}$.
    Besides that we have 
    $$(g_1\otimes g_2)(\cC_1|_{A_1}\otimes \F_q^{A_2}+\F_q^{A_1}\otimes \cC_2|_{A_2})=g_1 \cC_1|_{A_1}\otimes \im g_2+\im g_1\otimes g_2 \cC_2|_{A_2}\subseteq \cC_1\otimes \F_q^n+\F_q^n\otimes \cC_2,
    $$
    and therefore $x'\in \cC_1\boxplus \cC_2$.
    It is also not hard to see that $\rk x'\le\rk x(A_1,A_2)$. Finally, since $x'(A_1,A_2)=x(A_1,A_2)$, we get $\rk x'=\rk x(A_1,A_2)$.
\end{proof}

\begin{lemma}\label{lemma:no-sparse-lowrank}
    Consider linear codes $\cC_1\in \Gr(n,n-r_1)$, $\cC_2\in\Gr(n, n-r_2)$ that satisfy property~$(*)$. Then $d(\cC_i)> \alpha_i n$, and  every codeword $x\in \cC_1\boxplus \cC_2$ with weight $|x|\le \alpha_1\alpha_2 n^2/4$ and rank $\rk x\le\min(r_1,r_2)$ has a zero rectangle $A\times B$, where $\alpha_i:=H_q^{-1}(\frac{r_i}{8n})$, $|A|\ge n-\frac{|x|}{\alpha_2 n/2}$, $|B|\ge n-\frac{|x|}{\alpha_1 n/2}$.
\end{lemma}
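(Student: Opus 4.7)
The plan is to argue by contradiction: I would identify the ``light'' rows and columns of $x$ as the candidate sides $A,B$ of the zero rectangle, and show that if they fail to form one then an auxiliary codeword $x'\in\cC_1\boxplus\cC_2$ exists whose column and row spaces are sparse enough to violate property~$(*)$ via the rank bound of Lemma~\ref{lemma:rk-bound}. First I would observe that property~$(*)$ applied to the $1$-dimensional $\alpha_i$-sparse subspace spanned by any hypothetical codeword of $\cC_i$ of weight $\le\alpha_i n$ immediately forces $d(\cC_i)>\alpha_i n$, giving the first assertion of the lemma.

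Next I would define $A':=\{i\in[n]:|x(i,\cdot)|\ge\alpha_2 n/2\}$, $B':=\{j\in[n]:|x(\cdot,j)|\ge\alpha_1 n/2\}$, and $A:=[n]\setminus A'$, $B:=[n]\setminus B'$. Summing row (resp.\ column) weights of $x$ gives $|A'|\le 2|x|/(\alpha_2 n)$ and $|B'|\le 2|x|/(\alpha_1 n)$, which is exactly the claimed bound on $|A|,|B|$; combined with $|x|\le\alpha_1\alpha_2 n^2/4$, they also yield $|A'|\le\alpha_1 n/2$ and $|B'|\le\alpha_2 n/2$. Assuming for contradiction that $x(A,B)\ne 0$, I would apply Lemma~\ref{lemma:extend-cwpart} with $A_1=A$, $A_2=B$ to obtain $x'\in\cC_1\boxplus\cC_2$ with $x'(A,B)=x(A,B)$ and $\rk x'=\rk x(A,B)\ge 1$. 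Denoting $X':=\im x'$ and $Y':=\im(x')^\T$, by that lemma these spaces are spanned by $\{x'(\cdot,b):b\in B\}$ and $\{x'(a,\cdot):a\in A\}$ respectively, and $\dim X'=\dim Y'=\rk x'\le\min(r_1,r_2)$.

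The key step, and the main obstacle, will be to show that $X'$ is $\alpha_1$-sparse and $Y'$ is $\alpha_2$-sparse; this is where the cross-paired thresholds $\alpha_i n/2$ pay off. For a spanning column $x'(\cdot,b)$ with $b\in B$, restricted to $A$ it coincides with $x(A,b)$ and therefore contributes weight at most $|x(\cdot,b)|<\alpha_1 n/2$ (since $b\in B$), while restricted to $A'$ its support is trivially contained in $A'$ and hence contributes at most $|A'|\le\alpha_1 n/2$. Adding these gives $|x'(\cdot,b)|<\alpha_1 n$, and the symmetric argument handles $Y'$. Applying property~$(*)$ to $X'$ with respect to $\cC_1$ (noting $\dim X'\in[1,r_1]$) and to $Y'$ with respect to $\cC_2$ then yields $\dim(\cC_1\cap X')<\tfrac12\dim X'$ and $\dim(\cC_2\cap Y')<\tfrac12\dim Y'$, and combining with Lemma~\ref{lemma:rk-bound} gives
\[
\rk x'\le\dim(\cC_1\cap X')+\dim(\cC_2\cap Y')<\tfrac12\dim X'+\tfrac12\dim Y'=\rk x',
\]
a contradiction. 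Hence $x(A,B)=0$, and together with the bounds $|A'|\le\alpha_1 n/2<d(\cC_1)$, $|B'|\le\alpha_2 n/2<d(\cC_2)$ this produces the required zero rectangle.
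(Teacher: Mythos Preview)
Your proposal is correct and follows essentially the same approach as the paper: define $A,B$ via the row/column weight thresholds $\alpha_2 n/2$ and $\alpha_1 n/2$, use Lemma~\ref{lemma:extend-cwpart} to produce $x'$ whose row and column spaces are $\alpha_2$- and $\alpha_1$-sparse, and then combine property~$(*)$ with Lemma~\ref{lemma:rk-bound} to reach the contradiction $\rk x'<\rk x'$. The only differences are cosmetic (strict vs.\ non-strict thresholds in defining $A,B$), and your explicit verification that $\dim X'\le\min(r_1,r_2)$ and that $A\times B$ really meets the size requirements for a zero rectangle is a welcome bit of tidiness.
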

\begin{proof}
    First, let us fix a~vector $v\in\F_q^n$, $|v|\le\alpha_i n$, define a 1-dimensional space $V:=\abr{v}\subseteq \F_q^n$, and use property~$(*)$ for $\cC_i$, $i\in [2]$. As a~result, we obtain $\dim(\cC_i\cap V)<1/2$, and hence $v\ne \cC_i$. Therefore, we get $d(\cC_i)> \alpha_i n$  ($i\in [2]$).

    Now consider a codeword $x\in \cC_1\boxplus \cC_2$ with weight $|x|<\alpha_1\alpha_2n^2/4$ and rank $\rk x\le\min(r_1,r_2)$. Let $A$ (resp.~$B$) be the set of indexes of rows (resp. columns) of the matrix $x$ of weight no more than $\alpha_2 n/2$ (resp., $\alpha_1n/2$), $\bar A:=[n]\setminus A$, $\bar B:=[n]\setminus B$. Then $|x|\ge \max(|\bar B|\alpha_1,|\bar A|\alpha_2) n/2$, and therefore  
    $$|\bar A|\le \frac{|x|}{\alpha_2 n/2}\le \alpha_1n/2<d(\cC_1),\qquad |\bar B|\le \frac{|x|}{\alpha_1 n/2}\le \alpha_2n/2<d(\cC_2)$$
    This means that $A$ (resp., $B$) contains an information set of the code $\cC_1$ (resp., the code $\cC_2$).
    Now assume that $A\times B$ is not a zero rectangle of the codeword~$x$, i.e. $X(A,B)\ne 0$. 
    By Lemma~\ref{lemma:extend-cwpart} there exists a codeword~$x'\in \cC_1\boxplus \cC_2$ such that $x'(A,B)=x(A,B)$, all the rows (resp. columns) of $x'$ are spanned by the rows of $x'(A,\cdot)$ (resp. columns of $x'(\cdot,B)$). 
    Since $|x'(i,\cdot)|\le |x'(i,B)|+|\bar B|=|x(i,B)|+|\bar B|\le \alpha_2 n$ for $i\in A$, the space of rows $Y$ of the matrix $x'$ is $\alpha_2$-sparse. 
    In the same way, the space of columns $X$ of $x'$ is also $\alpha_1$-sparse. From property~$(*)$ for the codes $\cC_1$ and $\cC_2$ we get $\dim (\cC_1\cap X)<\frac12\dim X$, $\dim (\cC_2\cap Y)<\frac12\dim Y$, and therefore by Lemma~\ref{lemma:rk-bound} we get a~contradiction:
    $$\rk x'\le \dim(\cC_1\cap X)+\dim(\cC_2\cap Y) < \frac12\dim X+\frac12\dim Y=\rk x',$$
    and $X(A,B)=0$. Hence $A\times B$ is a zero rectangle in $x$ with the required parameters, and the proof is complete.
\end{proof}

\ThMain*
\begin{proof}
    Let $\alpha_i=H_q^{-1}(\eps_i/8)$ ($i\in [2]$),  
    \begin{equation}\label{eqn:rho}
        \rho=\frac12\min\!\rbr{\frac{\alpha_1\alpha_2}{4}, H_q^{-1}\!\!\rbr{\frac{\eps_1\eps_2}{8}}}.
    \end{equation}

    Consider $r_1:=n-k_1$, $r_2=n-k_2$. We have $r_1\ge \eps_1 n$, $r_2\ge \eps_2 n$. 
    For simplicity we assume that $n$ is sufficiently large and $\min(r_1,r_2)\ge 8$.
    Consider two cases of ``bad'' pairs of codes and estimate the probability.
    \begin{enumerate}
        \item The set $P_1\subseteq P(n,r_1,r_2)$ of all pairs $(\cC_1,\cC_2)$ such that there exists $x\in \cC_1\boxplus \cC_2$  of weight $|x|\le 2\rho n^2$ and of rank $\rk x\ge\min(r_1,r_2)$. Using Lemma~\ref{lemma:prodexp-highrank} and applying the union bound for all possible words of weight $\le 2\rho n^2$ we get the bound:
        \begin{multline}\label{eqn:P1}
        \PP(P_1)
        \le \sum_{i=0}^{\floorbr{2\rho n^2}}\binom{n^2}{i}(q-1)^i\cdot 5q^{-\frac14 r_1r_2}
        \le q^{n^2 H_q(2\rho)}\cdot 5q^{-\frac14 n^2 \eps_1\eps_2} \\
        \le 5q^{\frac18 n^2 \eps_1\eps_2-\frac14 n^2 \eps_1\eps_2}=5q^{-\frac18 n^2 \eps_1\eps_2}.
        \end{multline}
        \item The set  $P_2\subseteq P(n,r_1,r_2)$ of all such pairs $(\cC_1,\cC_2)$ such that one of the codes does not have property~$(*)$. By Lemma~\ref{lemma:subspace-int} the probability that $\cC_i$ does not have property~$(*)$ is not more than $4\frac{q^{-r_i/8}}{1-q^{-r_i/8}}$. Hence we get:
        \begin{equation}\label{eqn:P2}
          \PP(P_2)\le 4\frac{q^{-r_1/8}}{1-q^{-r_1/8}}+4\frac{q^{-r_2/8}}{1-q^{-r_2/8}}\le 16 q^{-\frac18\min(r_1,r_2)}\le 16q^{-\frac18 n\min(\eps_1,\eps_2)}. 
        \end{equation}
    \end{enumerate}
    Combining \eqref{eqn:P1} and \eqref{eqn:P2} as $n\to\infty$ we get $\PP(P(n,r_1,r_2)\setminus P_1\setminus P_2)\to 1$.
    
    It remains to show for an arbitrary pair $(\cC_1,\cC_2)\in P(n,r_1,r_2)\setminus P_1\setminus P_2$ that it is $\rho$-product-expanding. Let us fix a~non-zero codeword $x\in \cC_1\boxplus \cC_2$. Now consider two cases:
    \begin{enumerate}
        \item Case $|x|\ge 2\rho n^2$. In this case, $x$ can always be represented as a sum of columns from $\cC_1$ and rows from $\cC_2$. However the total number of rows and columns is $2n\le\frac{|x|}{\rho n}$.
        \item Case $|x|< 2\rho n^2$. Since $(\cC_1,\cC_2)\not\in P_1$, then $\rk x<\min(r_1,r_2)$. Hence since  $|x|\le \frac{\alpha_1\alpha_2}{4}n^2$ and the codes $\cC_1,\cC_2$ have property $(*)$, then by Lemma~\ref{lemma:no-sparse-lowrank} the word $x$ has a~zero rectangle $A_1\times A_2$, where $|A_1|\ge n-\frac{2|x|}{\alpha_2 n}$, $|A_2|\ge n-\frac{2|x|}{\alpha_1 n}$, and $d(\cC_i)\ge \alpha_i n$. Then by Lemma~\ref{lemma:zeroangle} the word $x$ can be represented as a sum of  $n-|A_2|$ columns from  $\cC_1$ and $n-|A_1|$ rows from $\cC_2$.
        Therefore we get the bound $(n-|A_1|)+(n-|A_2|)\le \frac{2|x|}{\alpha_1 n}+\frac{2|x|}{\alpha_2 n}\le \frac{|x|}{(\alpha_1\alpha_2/4)n}<\frac{|x|}{\rho n}$.
    \end{enumerate}
    Hence in all cases we get that $x$ can be represented as no more than $\frac{|x|}{\rho n}$ columns from  $\cC_1$ and rows from $\cC_2$. Therefore a pair of codes $(\cC_1,\cC_2)$ is $\rho$-product-expanding.
\end{proof}

\CrMain*
\begin{proof}

    Fix $\delta=\min(R_1,R_2,1-R_1,1-R_2)/2$.
    Applying Theorem \ref{th:rand-expanding} with $\eps_i=(1-R_i-\delta)$, $i\in [2]$ we have that for some $\rho_1>0$ a random pair of codes $(\cC_1,\cC_2)\in \Gr(n,k_1)\times \Gr(n,k_2)$ is $\rho_1$-product expanding with high probability when $k_i\le (R_i+\delta)n$, $i\in [2]$.
    Applying Theorem \ref{th:rand-expanding} with $\eps_i=(R_i-\delta)$, $i\in [2]$ we have that for some $\rho_2>0$ a random pair of codes $(\cC_1^\bot,\cC_2^\bot)\in \Gr(n,n-k_1)\times \Gr(n,n-k_2)$ is $\rho_2$-product expanding with high probability when $k_i\ge (R_i-\delta)n$, $i\in [2]$.
    
    Since $k_i/n\to R_i$, when $n$ is big enough, we have $n(R_i-\delta)\le k_i\le (R_i+\delta)n$, 
    $i\in [2]$. Hence, for a random pair of codes $(\cC_1,\cC_2)\in \Gr(n,k_1)\times \Gr(n,k_2)$ both pairs $(\cC_1,\cC_2)$ and $(\cC_1^\bot,\cC_2^\bot)$ are $\rho'$-product-expanding with high probability as $n\to\infty$, 
    where $\rho'=\min(\rho_1,\rho_2)$.
    

    By Lemma \ref{lemma:prodexp-vs-agreement}, $\rho'$-product expansion of pairs $(\cC_1,\cC_2)$ and $(\cC_1^\bot,\cC_2^\bot)$ is equivalent to $\rho'$-agreement testability of product codes $\cC_1\otimes\cC_2$ and $\cC_1^\bot\otimes \cC_2^\bot$, which by \cite[Lemma 2.9]{Dinur:stoc2022} implies $\rho$-robust testability of these codes with $\rho=\frac{\rho'}{2(\rho'+1)}$.
\end{proof}

\section{Discussions and Open Problems}

In this paper, we showed that a~random pair of linear codes is product-expanding with high probability, which also implies that they are robustly and agreement testable. We conjecture that a~collection of more than two random codes is also product-expanding with high probability. However the proof method we used here can not be directly applied to this more general case. We think that the product-expansion in the case of more than two codes can be used to construct high-dimensional analogs of the two-dimensional chain complexes from~\cite{Dinur:stoc2022, Panteleev&Kalachev:stoc2022}.  

Note that the one-dimensional version of $\rho$-product-expansion (i.e., for only \emph{one} code~$\cC$) naturally corresponds to the relative minimum distance of $\cC$. If for a~collection of codes $\cC = (\cC_i)_{i\in[m]}$ we denote by $\rho(\cC)$ its \emph{expansion factor}, i.e. \[
\rho(\cC) := \sup \{\rho \in [0,1] \mid \cC \text{ is  $\rho$-product-expanding}\},
\]   
then in the one-dimensional case we have $\rho(\cC) = \delta(\cC)$. Thus, in the general case of $m$ codes, we can view $\rho(\cC_1,\dots,\cC_m)$ as an~$m$-dimensional analog of the relative minimum distance.  Note that in the two-dimensional case $\rho(\cC_1,\cC_2)$ is the maximal $\rho$ such that $\cC_1\otimes\cC_2$ is $\rho$-agreement testable. 

It is natural to study the optimal expansion factor $\rho(\cC_1,\dots,\cC_m)$ of a collection of $m$ codes of rates $R_1,\dots,R_m$.     
In this case, an~analogue of theorem~\ref{th:rand-expanding} is the Gilbert-Varshamov bound, which says that a~random code of rate $1-\eps$ with high probability has distance not less than $H^{-1}_q(\eps)$. This bound, in an~asymptotic regime when $\eps\to 0$, $n\to\infty$, matches (up to constant factors) the Hammning upper bound, and therefore the relative distance of the best codes is $\Theta(H^{-1}_q(\eps))=\Theta(\eps/\log \eps)$.

It is interesting to generalize the bound on $\rho$ for the multi-dimensional case (i.e., for the case of several codes $\cC_1,\dots,\cC_m$). Theorem~\ref{th:rand-expanding} gives a~lower bound on~\eqref{eqn:rho} in the two-dimensional case. One can obtain an~upper bound from the fact (Lemma~\ref{lemma:subset-exp}) that the relative minimal distance%
\footnote{Recall that in one-dimensional case we have $\rho(\cC) = \delta(\cC)$.}
of codes $\cC_1$ and $\cC_2$ is not less than $\rho$ and from the upper Hamming bound. Hence, using these observations we get $\rho=O(\min(H_q^{-1}(\eps_1),H_q^{-1}(\eps_2)))$ as $\min(\eps_1,\eps_2)\to 0$, where $1-\eps_i$ is the rate of $\cC_i$. Let us stress that the value of the parameter $\rho$ in \eqref{eqn:rho} can be estimated as
\[
\rho = O\Bigl(H_q^{-1}\!\rbr{\textstyle\frac{\eps_1}{8}}H_q^{-1}\!\rbr{\textstyle\frac{\eps_2}{8}}\Bigr)=O\!\rbr{\frac{\eps_1\eps_2}{\log_q \eps_1\log_q \eps_2}}\quad\mbox{ as }\quad \eps_1\to 0, \eps_2\to 0. 
\]
Now we see that if both rates tend to  $1$ (the case $\eps_1\to 0, \eps_2\to 0$), then the lower bound is much worse than the upper one. However, in the case that is used in the recent constructions of good qLDPC codes~\cite{Panteleev&Kalachev:stoc2022,Leverrier:focs2022} we have $\eps_1+\eps_2=1$. Therefore assuming that $\eps_1=\eps, \eps_2=1-\eps$ the value of $\rho$ in \eqref{eqn:rho} coincides with an upper bound (up to a~constant factor): 
\[
\rho=\Theta\rbr{\min\rbr{H_q^{-1}\!\rbr{\textstyle\frac{\eps}{8}}H_q^{-1}\!\rbr{\textstyle\frac{1-\eps}{8}},H_q^{-1}\!\rbr{\textstyle\frac{\eps(1-\eps)}{8}}}}=\Theta\!\rbr{H_q^{-1}(\eps)}\quad\mbox{ as }\quad \eps\to 0. 
\]

Note that it is also possible to get a~bound on the parameter $\rho$ that is independent of the finite field size. This bound is analogous to the Singleton bound $d\le n-k + 1$ for an $[n,k,d]_q$-code, which can be also formulated as 
\[\delta \le \eps + 1/n\]
where $\delta := d/n$ and $\eps := 1 - k/n$.

For a~pair of codes $\cC_1\in\Gr(n,k_1)$ and $\cC_2\in\Gr(n,k_2)$ it is possible to prove a~similar bound:
\[
\rho(\cC_1,\cC_2)\le\eps_1\eps_2+1/n,
\]
where $\eps_i:=1-k_i/n$ (see Proposition~\ref{prop:upper} in Appendix~\ref{app:upper}).
It is interesting whether this bound is asymptotically tight. More precisely, whether for arbitrary $\eps_1,\eps_2\in (0,1)$ there exists an infinite family  $(\cC_1^i,\cC_2^i)_{i\in\NN}$ of code pairs with rates  $R(\cC_1^i) \ge 1-\eps_1$ and  $R(\cC_2^i) \ge 1-\eps_2$ such that $\rho(\cC_1^i,\cC_2^i)\ge \eps_1\eps_2$, $i\in\NN$.

\section*{Acknowledgment}

This work was supported by the Ministry of Science and Higher Education of the Russian Federation (Grant 075-15-2020-801).

\bibliographystyle{plain}
\bibliography{codes.bib}

\appendix

\section{Auxiliary Lemmas}\label{sc:auxlemmas}

\begin{lemma}\label{lemma:subset-exp}
Let $\cC=(\cC_i)_{i\in [m]}$ be a~$\rho$-product-expanding collection of codes $\cC_i\subsetneq \F_q^{n_i}$. Then each subcollection $\cC_I=(\cC_i)_{i\in I}$, $I\subseteq [m]$, is also $\rho$-product-expanding.
\end{lemma}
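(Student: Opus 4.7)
The plan is to reduce to the single-step case $|I| = m-1$, say $I = [m-1]$, and then iterate. The key idea is a \emph{lift-and-contract} argument: I would embed an arbitrary $c \in \cC_1 \boxplus \dots \boxplus \cC_{m-1}$ into the full $m$-dimensional ambient space $\F_q^{n_1 \times \dots \times n_m}$, invoke the $\rho$-product-expansion hypothesis there, and then project back to the $(m-1)$-dimensional grid using a vector dual to $\cC_m$. Since $\cC_m \subsetneq \F_q^{n_m}$, I can fix $v \in \cC_m^\perp$ and a coordinate $k_0 \in [n_m]$ with $v_{k_0} \ne 0$, which makes both the lift and the contraction non-degenerate.

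For the lift, set $\tilde c := c \otimes e_{k_0} \in \F_q^{n_1 \times \dots \times n_m}$. A short check shows that tensoring an element of the $(m-1)$-dimensional space $\cC^{(i)}$ with any fixed vector along the new $m$-th axis yields an element of the $m$-dimensional $\cC^{(i)}$, so $\tilde c$ belongs to $\cC_1 \boxplus \dots \boxplus \cC_m$. Applying the hypothesis gives $\tilde c = \sum_{i=1}^m \tilde a_i$ with $\tilde a_i \in \cC^{(i)}$ and $\rho \sum_i \norm{\tilde a_i}_i \le \norm{\tilde c}$. For the contraction, define $b_i(x) := v_{k_0}^{-1} \sum_k v_k\,\tilde a_i(x,k)$ for each $i$. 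Two observations make everything work: (i) $b_m = 0$, because every axis-$m$ line of $\tilde a_m$ lies in $\cC_m$ and is killed by pairing with $v \in \cC_m^\perp$; and (ii) for $i < m$, each axis-$i$ line of $b_i$ is a linear combination of axis-$i$ lines of $\tilde a_i$, which all lie in $\cC_i$, so $b_i$ lies in the corresponding $(m-1)$-dimensional version of $\cC^{(i)}$. Contracting the decomposition of $\tilde c$ then yields $c = \sum_{i<m} b_i$, the desired decomposition.

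The norm inequality follows from a clean cancellation between two opposite factors of $n_m$. The lift contributes $\norm{\tilde c} = \norm{c}/n_m$. The contraction contributes $\norm{b_i}_i \le n_m \norm{\tilde a_i}_i$: an axis-$i$ line of $b_i$ is non-zero only if at least one of the $n_m$ parallel axis-$i$ lines of $\tilde a_i$ is non-zero, while the number of axis-$i$ lines in the $m$-dimensional grid is exactly $n_m$ times the number in the $(m-1)$-dimensional one. Chaining these inequalities yields $\rho \sum_{i<m} \norm{b_i}_i \le \rho\, n_m \sum_{i=1}^m \norm{\tilde a_i}_i \le n_m \norm{\tilde c} = \norm{c}$, i.e., the required $\rho$-product-expansion of the subcollection $(\cC_i)_{i<m}$. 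Iterating the reduction over the indices in $[m] \setminus I$ finishes the proof.

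I expect the only non-routine step to be spotting the lift-and-contract mechanism and the exact cancellation between the two factors of $n_m$; the rest (invariance of supports under scaling by $v_{k_0}^{-1}$, identifying which codes the $b_i$ land in, and the iteration) is bookkeeping. Note that the hypothesis $\cC_i \subsetneq \F_q^{n_i}$ is used precisely at the step of choosing a non-zero $v \in \cC_m^\perp$ with a non-zero coordinate $v_{k_0}$.
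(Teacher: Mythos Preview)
Your proposal is correct and follows essentially the same lift-and-contract argument as the paper: lift $c$ by tensoring with a standard basis vector $e_{k_0}$ along the dropped axis, apply the full $\rho$-product-expansion, and contract via pairing with a vector in $\cC_m^\perp$ whose $k_0$-th coordinate is nonzero; the paper's $b_m$ and $e^m_{j_m}$ are exactly your $v_{k_0}^{-1}v$ and $e_{k_0}$. The only cosmetic difference is that the paper drops all indices in $[m]\setminus I$ in a single step (tensoring with $\bigotimes_{i>m'} e^i_{j_i}$ and contracting with the corresponding product of dual functionals), whereas you remove one index at a time and iterate, which is equivalent.
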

\begin{proof}
    Without loss of generality we can assume that $I=[m']$, $1\le m'<m$.
    Since $\cC_i\neq \F_q^{n_i}$ we can find 
    $j_i\in [n_i]$ such that vector $e^i_{j_i}:=(0,...,0,1,0,...,0)\in \F_q^{n_i}$ with 1 on the~$j_i$-th position is not a~codeword of $\cC_i$ for $m'+1\le i\le m$. We can also find $b_i\in \cC_i^\bot$ such that $\langle b_i,e^i_{j_i}\rangle=1$ for $m'+1\le i\le m$.
    For a~word $c\in \boxplus_{i\in [m']}\cC_i$ define the word $c':=c\otimes \bigotimes_{i=m'+1}^{m}e^i_{j_i}\in \cC_1\boxplus\cdots\boxplus\cC_m$.
    Since $\cC$ is $\rho$-product-expanding we have $c=\sum_{i\in[m]}a_i$, $a_i\in \cC^{(i)}$ such that $\rho\sum_{i\in [m]}\|a\|_i\le \|c\|$.
    Define linear maps $\phi_i:\F_q^{n_i}\to \F_q$, $\phi_i x:=\langle x,b_i\rangle$ for $i=\overline{m'+1,m}$, 
    $$\phi:=\id_1\otimes \cdots\otimes \id_{m'}\otimes \phi_{m'+1}\otimes\cdots\otimes \phi_m:\F_q^{n_1\times \cdots\times n_m}\to \F_q^{n_1\times \cdots\times n_{m'}}.$$ Let $a'_i := \phi a_i$ for $i\in[m']$. By construction, we have $\|a'_i\|_i\le N\|a_i\|_i$ for $i\in [m']$, $\phi a_i=0$ for $i\ge m'+1$ and $\phi c'=c$ where $N:=\prod_{i=m'+1}^m n_{i}$ is a normalization coefficient. Hence 
    \[
    c=\sum_{i\in[m']}a'_i\qquad\mbox{and}\qquad
    \rho\sum_{i\in [m']}\|a'_i\|_i\le \rho \sum_{i\in [m]}N\|a_i\|_i\le  N\|c'\|=\|c\|.
    \qedhere
    \]
\end{proof}
\subsection{Proof of Lemma \ref{lemma:qbinbounds}}\label{sc:proof-qbinbounds}
Let us recall the statement of the lemma.
\QbinBounds*
\begin{proof} Indeed, we have:
$$
    \qbin{n}{k} = \prod_{i=0}^{k-1}\frac{q^{n-i}-1}{q^{k-i}-1} 
    = q^{k(n-k)}\prod_{i=0}^{k-1}\frac{1-q^{i-n}}{1-q^{i-k}}.
$$
Since $k\le n$ we have $1-q^{i-n}\ge 1-q^{i-k}$. Therefore $\prod_{i=0}^{k-1}\frac{1-q^{i-n}}{1-q^{i-k}}\ge 1$. On the other hand, we see that
$$ 
\prod_{i=0}^{k-1}(1-q^{i-k})=\prod_{i=1}^{k}(1-q^{-i})\ge (1-q^{-1})\Bigl(1-\sum_{i=2}^\infty q^{-i}\Bigr)=1-q^{-1}-q^{-2}\ge \frac{1}{4},
$$
where the last inequality holds since $q\ge 2$. Thus we have
$$\prod_{i=0}^{k-1}\frac{1-q^{i-n}}{1-q^{i-k}}\le \prod_{i=0}^{k-1}\frac{1}{1-q^{i-k}}\le 4,$$
which completes the proof.
\end{proof}

\subsection{Proof of Lemma \ref{lemma:intersection}}\label{sc:proof-intersection}
Let us recall the statement of the lemma.
\LemIntersection*
We will use that tensor product of spaces is distributive over sums and intersections: For any spaces $A,B,C\subseteq\F_q^n$ we have
\[
(A+B)\otimes C=(A\otimes C)+(B\otimes C),\qquad (A\cap B)\otimes C=(A\otimes C)\cap (B\otimes C).
\]
In the proof we will many times apply the following lemma.
\begin{lemma}\label{lemma:linsp-eq}
    For any linear codes $A,B,C,D,E,F\subseteq \F_q^n$ we have: 
    $$A\otimes B\cap (C\otimes D+E\otimes F)=A\otimes B\cap (C'\otimes D+E\otimes F),
    $$
    where $C' = C\cap(A+E)$.
\end{lemma}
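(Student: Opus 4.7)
The plan is to prove the two inclusions separately. The $\supseteq$ direction is immediate: since $C' = C \cap (A+E) \subseteq C$, we have $C'\otimes D \subseteq C\otimes D$, and adding $E\otimes F$ to both sides and intersecting with $A\otimes B$ preserves the inclusion.

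For the nontrivial $\subseteq$ direction, the strategy will be to take an arbitrary element $x$ of the left-hand side, fix some decomposition $x = y + z$ with $y \in C\otimes D$ and $z \in E \otimes F$, and argue that \emph{the same} $y$ in fact already lies in the smaller space $C'\otimes D$. Once this is established, $x = y + z$ immediately witnesses membership in $A\otimes B \cap (C'\otimes D + E\otimes F)$.

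The key step will be a column-by-column analysis of $y$. Fix any column index~$j$. On the one hand, $y \in C\otimes D$ already forces $y(\cdot,j) \in C$ and, moreover, every row of $y$ to lie in $D$. On the other hand, from the decomposition we have $y(\cdot,j) = x(\cdot,j) - z(\cdot,j)$; since $x \in A\otimes B$ gives $x(\cdot,j) \in A$, and $z \in E\otimes F$ gives $z(\cdot,j) \in E$, we conclude $y(\cdot,j) \in A + E$. Combining the two constraints yields $y(\cdot,j) \in C \cap (A+E) = C'$. Because this holds for every column while the row constraint $y(i,\cdot) \in D$ is preserved, the paper's definition of the tensor-product code (matrices whose columns lie in one factor and rows in the other) gives $y \in C'\otimes D$.

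I do not anticipate a serious obstacle here: the entire argument is a direct observation that the column constraint on $y$ tightens from $C$ to $C \cap (A+E)$ once one uses both $x \in A \otimes B$ and $z \in E \otimes F$. The only point requiring care is to use the exact definition of the tensor-product code from the paper so that the conditions ``every column in $C'$'' and ``every row in $D$'' together certify membership in $C'\otimes D$; the row condition comes for free from $y\in C\otimes D$, and no modification of $y$ or $z$ is ever needed.
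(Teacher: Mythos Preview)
Your proposal is correct and follows essentially the same approach as the paper: both show that the element $y$ (the paper's $x_1$) already lies in $C'\otimes D$ by combining $y\in C\otimes D$ with $y=x-z\in A\otimes B+E\otimes F$. The only cosmetic difference is that the paper phrases this abstractly via the inclusion $A\otimes B+E\otimes F\subseteq (A+E)\otimes(B+F)$ and the identity $(C\otimes D)\cap((A+E)\otimes(B+F))=(C\cap(A+E))\otimes(D\cap(B+F))$, whereas you unpack the same computation column by column.
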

\begin{proof}
    The inclusion of the right-hand-side to the left-hand-side is obvious. Thus it remains to check the inclusion in the opposite direction. 
    Let $x\in A\otimes B\cap (C\otimes D+E\otimes F)$. Then $x=x_1+x_2$, where $x_1\in C\otimes D$, $x_2\in E\otimes F$. Therefore we get $x_1=x-x_2\in E\otimes F+A\otimes B\subseteq (E+A)\otimes (F+B)$. However, since $x_1\in C\otimes D$, we obtain that $x_1\in (C\otimes D)\cap ((E+A)\otimes (F+B))\subseteq (C\cap (A+E))\otimes D$. Therefore we get
    $x=x_1+x_2\in ((C\cap (A+E))\otimes D+E\otimes F)\cap A\otimes B$, and the proof is complete.
\end{proof}

\begin{proof}[Proof of Lemma \ref{lemma:intersection}]
    The proof follows from the following sequence of equations: 
    \begin{align*}
        (X\otimes Y)\cap (\cC_1\boxplus \cC_2)&=(X\otimes Y)\cap (\cC_1\otimes \F_q^n+\F_q^n\otimes \cC_2)\\
        (\mbox{applying two times Lemma~\ref{lemma:linsp-eq}})\quad &=(X\otimes Y)\cap (\cC_1\otimes \underline{(Y+\cC_2)}+\underline{(X+\cC_1)}\otimes \cC_2)\\
        (\mbox{using }\cC_1\otimes \cC_2\subseteq (X+\cC_1)\otimes \cC_2)\quad &=(X\otimes Y)\cap (\cC_1\otimes Y+(X+\cC_1)\otimes \cC_2)\\
        (\mbox{applying Lemma~\ref{lemma:linsp-eq}})\quad &=(X\otimes Y)\cap (\cC_1\otimes Y+(X+\cC_1)\otimes \underline{(\cC_2\cap Y)})\\
        (\mbox{using }\cC_1\otimes (\cC_2\cap Y)\subseteq \cC_1\otimes Y)\quad &=(X\otimes Y)\cap (\cC_1\otimes Y+X\otimes (\cC_2\cap Y))\\
        (\mbox{applying Lemma~\ref{lemma:linsp-eq}})\quad &=(X\otimes Y)\cap (\underline{(\cC_1\cap X)}\otimes Y+X\otimes (\cC_2\cap Y))\\
        (\mbox{every term in a sum is from }X\otimes Y)\quad &= (\cC_1\cap X)\otimes Y+X\otimes (\cC_2\cap Y).
    \end{align*}
    The lemma is proved. \qedhere
\end{proof}

\section{Coboundary Expansion for Product Codes}\label{sc:chain}

In this subsection, we show that our definition of product-expansion given before is just an~instance of coboundary expansion~\cite{Linial:2006,Gromov:2010} of some natural $m$-dimensional cochain complex associated with the codes $\cC_1,\dots,\cC_m$. To define this complex, we also need some additional notations and definitions from~\cite{Panteleev&Kalachev:stoc2022} related with the local systems of coefficients defined on a~poset~$X$, which we usually view as the cell poset of some cell complex. It was suggested in~\cite{Panteleev&Kalachev:stoc2022} to extend the theory of high-dimensional expanders to this much more general context by replacing the standard Hamming norm with the block Hamming norm, measuring the number of non-zero blocks of data assigned to the elements of the poset $X$. Note that similar ideas were also independently developed in~\cite{firstGoodQueryLocally2023} in the special case of simplicial complexes.

Let us briefly recall that a~\emph{chain complex} over $\F_q$ is a~vector space $\sC = \bigoplus_{i\in\ZZ}\sC_i$ (over $\F_q$) with some fixed linear map $\partial\colon\sC
\rightarrow \sC $ called the~{\tmem{boundary map}} such that $\partial^2 = 0$ and $\partial
\sC_{i} \subseteq \sC_{i-1}$ for all $i\in \mathbb{Z}$. A~complex $(\sC,\partial)$ is usually represented by a~sequence 
\[\cdots\xrightarrow{\partial_{i+1}} \sC_{i} \xrightarrow{\partial_{i}} \sC_{i-1} \xrightarrow{\partial_{i-1}}\cdots\]
of vector spaces $\sC_i$ and maps\footnote{Sometimes when it does not cause confusion we omit the indexes in the boundary maps $\partial_i$ and simply write $\partial$ instead of $\partial_i$.} $\partial_i := \partial
|_{\sC_i}\colon\sC_i \rightarrow \sC_{i - 1}$ such that $\partial_{i}\circ \partial_{i+1} = 0$ for all $i\in\Z$. The term $\sC_i$ in a~complex~$\sC$ is called the space of \emph{$i$-chains} and the assertion $\partial_{i}\circ \partial_{i+1} = 0$ is equivalent to $\im \partial_{i+1} \subseteq \ker \partial_i$, which allows us to consider for every $i\in\Z$ the quotient vector space $H_i(\sC) = \ker \partial_i / \im \partial_{i+1}$ called the~\emph{$i$-th homology group} of the complex~$\sC$. 
The elements from $Z_i  := \ker \partial_i$ and $B_i  := \im \partial_{i+1}$ are called
the {\tmem{$i$-cycles}} and {\tmem{$i$-boundaries}} of~$\sC$, respectively. 

In the context of error correcting codes, we are interested in complexes $\sC$ over $\F_q$ with $\tau$ non-zero terms (\emph{$\tau$-term} complexes), where each term~$\sC_i$ is \emph{based}, i.e.,  comes with a~distinguished basis ${X_i \subseteq \sC_i}$ over $\F_q$, which elements are called \emph{$i$-cells}. In many cases, it is convenient to consider $i$-chains $c\in \sC_i$ as formal $\F_q$-linear combinations 
\[c = \sum_{x\in X(i)} c_x x\] 
of $i$-cells and identify $\sC_i$ with~$\F_q^{n_i}$, where $n_i := \abs{X(i)}$. Note that any space of formal linear combinations $\F_q S \cong \F_q^{|S|}$ is equipped with the standard \emph{inner product} $\langle a, b \rangle := \sum_{s\in S} a_s b_s$ and the \emph{Hamming norm} $\abs{a} := \abs{\supp a}$, where $\supp a := \{s \in S \mid a_s \ne 0\}$. We usually interpret terms in such chain complexes either as the~space of code symbols or the space of parity-checks. 

A~\emph{cochain complex} is the dual notion to the chain complex, i.e., we reverse all arrows and replace the maps by their transposed maps. In this case, it is common to add prefix ``co'' to all standard terms, i.e., instead of chains, cycles, etc., we consider \emph{co}chains, \emph{co}cycles, etc.. Hence we have the spaces $\sC^i$, $Z^i$, and $B^i$ of $i$-cochains, $i$-cocycles, and
$i$-coboundaries, respectively.

A~\emph{local system} of coefficients $\cF$ can be used to obtain a~(co)chain complex over $\F_q$ out of a~poset~$X$ by assigning to each element $x$ from $X$ a~vector space $\cF_x$ over $\F_q$ and defining for every pair $x \le x'$ from $X$ an~$\F_q$-linear map $\cF_{x\to x'}\colon \cF_x \to \cF_{x'}$  such that:  $\cF_{x\to x} = \id_{\cF_x}$ for all $x$ from $X$, and $\cF_{x'\to x''}\circ\cF_{x\to x'} = \cF_{x\to x''}$ for all triples $x \le  x' \le x''$ from $X$.  

\begin{remark}[for experts]\label{rm:sheaves}
Note that the general idea of the homology theory with local coefficients was originally proposed by Steenrod in the~1940s~\cite{Steenrod:1943}, while the idea of assigning abelian groups $\cF_x$ to the cells of an~arbitrary cell complex $X$ (simplicial, cubical, etc.) appeared later in the~1960s~\cite[p.~624]{Zeeman:1962}, where it was also connected to sheaf theory. Since that time it has been reappeared several times in different contexts~\cite{Kashiwara:1984, shepard1985cellular}, and nowadays such local systems are often called \emph{cellular sheaves}~\cite{Curry:2014, Hansen:2019}. They have found many interesting applications in computer science~\cite[Chapter~9]{ghrist2014elementary}. The connection of cellular sheaves to the classical ones on topological spaces is done by (1) defining the Alexandrov topology on a~poset $X$ where the open sets are the unions of the upper sets $U_x := \{y\in X \colon x \le y\}$; (2) assigning to each $U_x$ the abelian group $\cF_x$; and (3) extending this definition to arbitrary open sets~\cite[Section~4.2]{Curry:2014}, \cite[Theorem~6.1]{Hu:2020}. 
\end{remark}

In the current paper, we are interested in the case when the poset $X$ consists of all  axis-parallel hyperplanes in the $m$-dimensional grid $[n_1]\times\dots\times [n_m]$. Denote by $\widehat{[n]}$ the poset defined on the set $\{*\}\cup \{1,\dots,n\}$ where $x < y$ \Iff $x=*$ and $y\in [n]$. Now, we define $X$ as the direct product%
    \footnote{The \emph{direct product} of posets $(X_1,\le_1),\dots,(X_m,\le_i)$ is the set $X_1\times\dots\times X_m$ with the partial order $\le $ defined by the rule: $(x_1,\dots,x_m) \le_{X} (x'_1,\dots,x'_m)$ \Iff  $\forall i\in[m]$: $x_i \le_{i} x'_i$.} 
$\widehat{[n_1]}\times\dots\times \widehat{[n_m]}$. The poset $X$ is equipped with the following grading function: For every $x = (x_i)_{i\in [m]}\in X$ we put  $\dim x := \sum_{i\in [m]} \dim x_i$, where $\dim x_i := 0$ if $x=*$ and $\dim x_i := 1$ otherwise. This allows us to divide $X$ into  $m+1$ levels $X = X(0)\sqcup X(1) \sqcup \dots \sqcup X(m)$, where $X(i) := \{x\in X \mid \dim x = i \}$.
In the two-dimensional case (i.e., $m=2$), one can identify $X$ with different parts of an $n_1\times n_2$ matrix $M$: $X(2)$ is the index set of the $n_1 n_2$ individual elements $M(i,j)$; $X(1)$ is the index set for  $n_1$ rows $M(i,\cdot)$ and  $n_2$ columns $M(\cdot, j)$; and $X(0)$ is the index set for the one element representing the entire matrix $M(\cdot,\cdot)$. 

Note that such posets naturally arise as local neighborhoods of the vertices in the Cartesian product%
    \footnote{We assume that a~graph $\cG = (V,E)$ is represented by a~two-level graded poset $V\sqcup E$, where $v < e$ \Iff the vertex $v$ is incident to the edge $e$, and the Cartesian product of graphs is represented by the direct product of their posets.} 
$\cG_1\times \dots \times \cG_m$ of $m$ simple $n_i$-regular graphs and other cubical complexes having the same structure \emph{locally} (e.g., the complexes from~\cite{Panteleev&Kalachev:stoc2022} obtained as finite coverings of graph products). Indeed, one can represent each local neighborhood in the $n_i$-regular graph $\cG_i$ as the poset 
$\widehat{[n_i]}$, where $*$ denotes the vertex, while the elements of $[n_i]$ represent the $n_i$ connected to this vertex edges. Hence the local neighborhood of a~given vertex $v$ in $P = \cG_1\times \dots \times \cG_m$, defined as the subposet $\st_v P := \{x \in X \mid v \le x\}$, is clearly isomorphic%
\footnote{Recall that two graded posets are \emph{isomorphic} if there exists a~one-to-one map respecting the order and the grading.} 
to the graded poset $X = \widehat{[n_1]}\times\dots\times \widehat{[n_m]}$. Moreover, if we decrease the dimension of each element in $P_v$ by one, then the obtained graded poset can be identified with a~link of $v$ in $P$, which geometrically can be viewed as the intersection of a~sufficiently small $m$-dimensional sphere around $v$ with the geometrical realization of $P$. It is not hard to check that this link is just the clique complex $\mathbf{X}(K_{n_1,\dots,n_m})$ of the complete $m$-partite graph $K_{n_1,\dots,n_m}$. Indeed, every tuple $(x_1,\dots,x_m)\in X$ corresponds to the simplex 
\[\{x_i \mid x_i\ne * , i \in [m]\} \subseteq [n_1]\sqcup \dots \sqcup [n_m]\] 
from $\mathbf{X}(K_{n_1,\dots,n_m})$. For example, in the mentioned above two-dimensional case, the link is the complete bipartite graph $K_{n_1,n_2}$ (viewed as a~$1$-dimensional simplicial complex), where the edges correspond to the elements of the matrix, the vertices to the rows and columns, and the empty set~$\varnothing$ to the entire matrix.

Let $X$ be some finite poset, which we are going to use as an~index set. If a~vector space~$\sC$ is the~direct sum $\bigoplus_{x\in X} \cF_x$ of a~collection of vector spaces $(\cF_x)_{x\in X}$, then we can consider the elements of~$\sC$ as formal sums $\sum_{x\in X} a_x x$ of elements from $X$, where for every $x\in X$ the coefficient $a_x$ is from the vector space~$\cF_x$ called the \emph{local coefficient space} of~$x$ or the \emph{stalk} of $\cF$ at $x$ when we view $\cF$ as the cellular sheaf. In such cases, we also denote the vector space~$\sC$ by~$\cF X$ or by~$A X$ when all the local coefficient spaces are equal to the same space~$A$. If each local coefficient space $\cF_x$ comes with a~distinguished basis $\tilde{\cF_x}$, then we assume that the distinguished basis for $\cF X$ is the set $\{ax \mid a\in \tilde{\cF_x},\ x\in X \}$, in which case we say that $\cF X$ is \emph{based}.

Consider a~based chain complex $\sC=\cF X$ over $\F_q$. Let $a = \sum_{x \in X}a_x x \in \sC$, where each coefficient $a_x$ is from the based vector space $\cF_x$ over $\F_q$. We denote by $\wt(a)$ the standard Hamming weight of $a$, considered as a~vector over $\F_q$. We also consider the \emph{block weight}  $\wt_{X}(a)$ defined as the number of non-zero blocks in $a$, viewed as a~block vector $(a_{x})_{x\in X}$, i.e., $\wt_X (a) := |\supp_X a|$, where $\supp_X a := \{x\in X \mid a_x \ne 0\}$.

\begin{figure}
    \centering
    \begin{tikzpicture}
        \begin{scope}[xscale=1]
        \filldraw[red!30] (4.75,1) rectangle (5.25,2);
        \filldraw[blue!20] (1,4.75) rectangle (2,5.25);
        \filldraw[red!30] (4.75,4) rectangle (5.25,6);
        \filldraw[blue!20] (4,4.75) rectangle (6,5.25);
        \filldraw[red!40!blue!20] (4.75,4.75) rectangle (5.25,5.25);
        \draw[step=1] (1,1) grid (2,2);
        \draw[step=0.25,gray,dotted] (4,1) grid (5.99,1.99);
        \draw[step=1] (3.99,1) grid (6,2);
        \draw[step=1,shift={(0.25,0)}] (3.99,1) grid (5.75,2);
        \draw[step=1,shift={(0.5,0)}] (3.99,1) grid (5.5,2);
        \draw[step=1,shift={(0.75,0)}] (3.99,1) grid (5.25,2);
        \draw[step=0.25,gray,dotted] (1,4) grid (1.99,5.99);
        \draw[step=1] (1,3.99) grid (2,6);
        \draw[step=1,shift={(0,0.25)}] (1,3.99) grid (2,5.75);
        \draw[step=1,shift={(0,0.5)}] (1,3.99) grid (2,5.5);
        \draw[step=1,shift={(0,0.75)}] (1,3.99) grid (2,5.25);
        \draw[step=0.25] (3.99,3.99) grid (6,6);
        \draw[step=0.25,gray,dotted] (1,1) grid (1.99,1.99);
        \draw[->] (1.5,2.3) --node[right]{$g_1\otimes \id$} (1.5,3.7);
        \draw[->] (5,2.3) --node[right]{$g_1\otimes \id$} (5,3.7);
        \draw[->] (2.3,1.5) --node[above]{$\id\otimes g_2$} (3.7,1.5);
        \draw[->] (2.3,5) --node[above]{$\id\otimes g_2$} (3.7,5);
        \node[below] at (1.5,1) {$\F_q^{k_1\times k_2}$};
        \node[left] at (1,5) {$\F_q^{k_2}$};
        \node[below] at (5,1) {$\F_q^{k_1}$};
        \end{scope}
    \end{tikzpicture}
    \caption{Local system for $\sC(g_1, g_2)$.}
    \label{fig:local-syst}\label{fg:coboundary}
\end{figure}

To every linear code $\cC\subseteq \F_q^n$ defined by the encoding map $g\colon \F_q^k\to \F_q^n$ (i.e., $\im g = \cC$, $\rk g = k$) we can assign the complex 
\begin{equation}\label{eq:chain-g}
    \sC(g) := (\F_q^k\{*\} \xrightarrow{g} \F_q[n]),
\end{equation}
where we assume the straightforward isomorphisms $\F_q^k\{*\} \cong \F_q^k$ and $\F_q[n] \cong \F_q^n$. We can describe $\sC(g)$ by the local system $\cF$ on the poset $\widehat{[n]}$ defined as 
\[
\cF_x := \begin{cases} 
      \F_q^k & x = *; \\
      \F_q & x = i \in [n], \\
   \end{cases}
\]
and $\cF_{*\to i} := x \mapsto  (gx)|_i$, where $|_i$ denotes the projection on the $i$-th coordinate. 

Given a collection of codes $\cC_1,\dots,\cC_m$ defined by the corresponding encoding maps $g_1,\dots,g_m$ we can consider the tensor product complex $\sC(g_1,\dots,g_m) := \sC(g_1) \otimes \dots \otimes \sC(g_m)$. This tensor product complex can also be described by a~local system on the poset $X = \widehat{[n_1]}\times\dots\times \widehat{[n_m]}$ defined above. It is not hard to check that the code $\cC_1 \boxplus \dots \boxplus \cC_m$ corresponds to the space of $m$-coboundaries $B^{m}$ of the complex $\sC(g_1,\dots,g_m)$. Moreover, the collection $(\cC_i)_{i\in[m]}$ is $\rho$-product-expanding \Iff $\sC(g_1,\dots,g_m)$ is a coboundary expander in the following sense:
For every $c\in B^{m}$ we have 
\begin{equation}\label{eq:cobound-exp}
\rho m \cdot \min_{a\in \sC^{m-1}\colon \delta a = c}  \norm{a}  \le \norm{c},    
\end{equation}
where $\delta\colon \sC \to \sC$ is the coboundary map of the complex $\sC = \sC(g_1,\dots,g_m)$.
Here we assume that $n_1 = \dots = n_m$, and we let\footnote{In the general case, we need to use a~more general weight instead of $\wt_X(a)$.} $\norm{a} := \frac{1}{|X(i)|} \cdot \wt_X(a)$ if $a\in \sC^i$.
For example, in the two-dimensional case, we have the complex shown in Fig.~\ref{fg:coboundary}.  Note that formula~\ref{eq:cobound-exp} can be expressed in a~more standard way if we consider the $i$-th Cheeger constant $h^i(\sC)$ for the complex $\sC$ with respect to the norm $\norm{\cdot}$:

\begin{equation*}
h^i(\sC) :=  \min_{x\in\sC^{i}\setminus B^{i}} \frac{\norm{\delta x}}{\min_{b\in B^{i}} \norm{x - b}},  
\end{equation*}
Now if we recall that $\sC$ is a product of acyclic complexes (all homology groups are trivial in (\ref{eq:chain-g})), then by the K\"{u}nneth formula $\sC$ is also acyclic, and hence $B^{m-1} = Z^{m-1}$. Therefore we have  
\begin{equation}
\min_{b\in B^{m-1}} \norm{x - b} = \min_{b\in Z^{m-1}} \norm{x - b} = \min_{a\in \sC^{m-1}\colon \delta a = c} \norm{a},  
\end{equation}
where $c = \delta x$. Thus equation~(\ref{eq:cobound-exp}) can be rewritten as $\rho \le \frac1m h^{m-1}(\sC)$. This means that the product-expansion factor for a~collection of $m$ codes is equal, up to a~normalizing term $1/m$, to the $(m-1)$-th Cheeger constant\footnote{If we view $\sC$ as a~local system on the clique complex $\mathbf{X}(K_{n_1,\dots,n_m})$, then in the chain complex all the dimensions are decreased by one, and $\rho$ corresponds to the $(m-2)$-th normalized Cheeger constant.}:
\begin{equation}
  \rho(\cC_1,\dots,\cC_m) = \frac1m h^{m-1}(\sC).     
\end{equation}

\section{Previous Forms of Product-expansion}\label{app:prod-exp}

In this section, we show that the $\rho$-product-expansion corresponds to a~strong form of the $(s,m,\beta)$-product-expansion from~\cite{Panteleev&Kalachev:stoc2022}, where $s=\Theta(n^2)$, $m=\Theta(n)$. Let us remind this definition from \cite{Panteleev&Kalachev:stoc2022} using the notation of the current work.

A~codeword $x\in \cC=\cC_1\boxplus \cC_2$ is called \emph{$\Delta$-minimal} if the following conditions hold:
\begin{enumerate}
    \item $|x(i,\cdot)|\le d(x(i,\cdot), \cC_2)+\Delta$ for all $i\in [n]$,
    \item $|x(\cdot,j)|\le d(x(\cdot,j), \cC_1)+\Delta$ for all $j\in [n]$,
\end{enumerate}

\begin{definition}
A~pair of codes $(\cC_1, \cC_2)$, $\cC_1,\cC_2\subseteq \F_q^n$, is called \emph{$(s,m,\beta)$-product-expanding} if for each non-zero $\beta n$-minimal 
codeword $x\in \cC_1\boxplus \cC_2$ and for each $A, B\subseteq [n]$ such that 
$|A|,|B|\ge n-m$ we have $|x(A,B)|\ge s$.
\end{definition}

\begin{lemma}
    If a pair of codes $(\cC_1, \cC_2)$, $\cC_1\subseteq \F_q^{n}$, is $\rho$-product-expanding then 
    it is $(s,m,\beta)$-product-expanding, where $s=\rho^2n^2/3$, $m=\rho^2 n/6$, $\beta=\rho/3$.
\end{lemma}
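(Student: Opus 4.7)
My plan is to decompose $x$ via $\rho$-product-expansion as $x = a_1 + a_2$ with $a_i \in \cC^{(i)}$ and $|a_1|_1 + |a_2|_2 \le |x|/(\rho n)$, and then use $\beta n$-minimality to constrain the set $P$ of non-zero columns of $a_1$ and the set $Q$ of non-zero rows of $a_2$. Throughout I will use that $\rho$-product-expansion implies $d(\cC_i) \ge \rho n$ (Lemma~\ref{lemma:subset-exp}), so that $\beta n = \rho n/3 < d(\cC_i)$.

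The heart of the argument is an intrinsic lower bound $|x| \ge 2\rho^2 n^2/3 = 2s$ on any non-zero $\beta n$-minimal codeword. First, both $a_1$ and $a_2$ must be non-zero, because if $a_2 = 0$ then $x \in \cC^{(1)}$, so column minimality forces $|x(\cdot,j)| \le \beta n < d(\cC_1)$ for every $j$, killing $x$. Picking any $j_1 \in P$, the column $a_1(\cdot,j_1) \in \cC_1\setminus\{0\}$ has weight $\ge \rho n$, while $a_2(\cdot,j_1)$ is supported on $Q$ and so has weight $\le |Q|$. The reverse triangle inequality gives $|x(\cdot,j_1)| \ge \rho n - |Q|$, and column minimality combined with the decomposition gives $|x(\cdot,j_1)| \le |Q| + \beta n$. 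These two bounds on a single column force $|Q| \ge (\rho n - \beta n)/2 = \rho n/3$; symmetrically $|P| \ge \rho n/3$. Product-expansion then yields $|x| \ge \rho n(|P|+|Q|) \ge 2\rho^2 n^2/3$.

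To conclude, row and column minimality for the same decomposition give $|x(i,\cdot)| \le |a_1(i,\cdot)| + \beta n \le |P| + \beta n$ for every $i$, so $|x(\bar A,\cdot)| \le m(|P| + \beta n)$ and symmetrically $|x(\cdot,\bar B)| \le m(|Q| + \beta n)$. Combining $|x(A,B)| \ge |x| - |x(\bar A,\cdot)| - |x(\cdot,\bar B)|$ with $|P|+|Q| \le |x|/(\rho n)$ gives
\[
|x(A,B)| \ge |x|\rbr{1 - \tfrac{m}{\rho n}} - 2m\beta n = |x|\rbr{1 - \tfrac{\rho}{6}} - \tfrac{\rho^3 n^2}{9},
\]
and plugging in $|x| \ge 2\rho^2 n^2/3$ simplifies to $|x(A,B)| \ge 2\rho^2 n^2(3-\rho)/9 \ge \rho^2 n^2/3 = s$, where the last inequality is equivalent to $\rho \le 3/2$ and so is trivial.

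The main obstacle is the lower bound $|x| \ge 2s$: the argument genuinely needs the balance $\beta = \rho/3$ to be bounded away from $\rho$, so that the two-sided column estimate yields the nontrivial bound on $|Q|$ rather than collapsing to a vacuous one.
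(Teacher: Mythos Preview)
Your proof is correct and follows essentially the same strategy as the paper: decompose $x$ via product-expansion, use $\beta n$-minimality together with $d(\cC_i)\ge\rho n$ to force $|x|\ge 2s$, then pass to the submatrix. The main difference is that your argument for $|x|\ge 2s$ is phrased directly (showing $|P|,|Q|\ge\rho n/3$) rather than by contradiction, and your submatrix step is more elaborate than needed: the paper simply uses the crude bound $|x(A,B)|\ge |x|-(n^2-(n-m)^2)\ge 2s-2mn=s$, avoiding the row/column weight estimates and the final computation involving $\rho\le 3/2$.
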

\begin{proof}
    Consider some nonzero $\beta n$-minimal codeword $x\in \cC_1\boxplus \cC_2$. 
    Suppose $|x|< 2s$. In this case by $\rho$-product-expansion we have $x=y+z$ for some $y\in \cC_1\otimes \F_q^{A_2}$, $z\in\F_q^{A_1}\otimes \cC_2$ for some $A_1,A_2\subseteq [n]$ such that $|A_1|+|A_2|\le \frac{|x|}{n\rho} < \frac{2s}{n\rho}=2\rho n/3$. We can assume that each row of $z$ from $A_1$ is nonzero and each column of $y$ from $A_2$ is nonzero, otherwise we can take smaller $A_1$ and $A_2$. 
    Then $\min(|A_1|,|A_2|)< \rho n/3$. Without loss of generality assume that $|A_1|<\rho n/3$. Consider any nonzero column $y(\cdot, i)$. Since $x(\cdot,i)$ differs from $y(\cdot, i)$ only in rows from $A_1$, we have $|x(\cdot,i)-y(\cdot,i)|\le |A_1|< \rho n/3$. Since $y(\cdot,i)\in \cC_1$, we have 
    $$|x(\cdot,i)|\ge \underbrace{|y(\cdot,i)|}_{\ge \rho n}-\underbrace{|x(\cdot,i)-y(\cdot,i)|}_{<\rho n/3}\ge 2\rho n /3>|x(\cdot,i)-\underbrace{y(\cdot,i)}_{\in \cC_1}|+\underbrace{\rho n/3}_{\beta n},$$
    and we have a contradiction with $\beta n$-minimality of $x$. 
    Hence, $|x|\ge 2s$, therefore for each $A,B\subseteq [n]$ such that $|A|,|B|\ge n-m$ we have $|x(A,B)|\ge |x|-(n^2-|A\times B|)\ge |x|-n^2+(n-m)^2\ge 2s - 2mn = s$.
    %
    %
    %
\end{proof}

\begin{lemma}
    If a pair of codes $(\cC_1,\cC_2)$ is $(s,m,\beta)$-product-expanding, then it is $\rho$-product-expanding with $\rho=\min(\frac{s}{2n^2},\beta)$.
\end{lemma}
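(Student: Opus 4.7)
Given a codeword $c\in \cC_1\boxplus \cC_2$, the plan is to construct a decomposition $c=a_1+a_2$ with $a_1\in \cC^{(1)}$, $a_2\in\cC^{(2)}$ satisfying the equivalent unnormalized inequality $|a_1|_1+|a_2|_2\le |c|/(\rho n)$. The strategy will proceed in two stages: first, greedily ``peel off'' rows and columns to reduce the problem to a $\beta n$-minimal residue; then handle that residue by a trivial bound, exploiting that its weight is at least $s$.

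For the peeling stage, starting from $c$, as long as the current residue $c'\in\cC_1\boxplus\cC_2$ fails $\beta n$-minimality, there is either a~row $i$ with $|c'(i,\cdot)|>d(c'(i,\cdot),\cC_2)+\beta n$ or a~symmetric column~$j$. In the row case I would take a~closest codeword $e\in \cC_2$ to $c'(i,\cdot)$, subtract from $c'$ the matrix $M$ whose only non-zero row is $e$ at position $i$, and accumulate $M$ into~$a_2$; since every row of $M$ lies in $\cC_2$, we have $M\in \cC^{(2)}$, and the column case is symmetric. Each step strictly reduces the Hamming weight by more than $\beta n$ and adds at most one new non-zero row (resp.\ column) to $a_2^*$ (resp.\ $a_1^*$). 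Hence after some finite number $k<(|c|-|c^*|)/(\beta n)$ of steps the process terminates at a $\beta n$-minimal residue $c^*\in\cC_1\boxplus\cC_2$ with $|a_1^*|_1+|a_2^*|_2\le k$.

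For the residue stage, if $c^*=0$ the bound $|c|/(\beta n)\le|c|/(\rho n)$ is immediate. Otherwise, $c^*$ is a~non-zero $\beta n$-minimal codeword, so applying the $(s,m,\beta)$-product-expansion with the trivial choice $A=B=[n]$ gives $|c^*|\ge s$. Any decomposition $c^*=a_1^{**}+a_2^{**}$ with $a_1^{**}\in\cC^{(1)}$, $a_2^{**}\in\cC^{(2)}$ exists by the definition of $\cC_1\boxplus \cC_2$ and yields the crude bound $|a_1^{**}|_1+|a_2^{**}|_2\le 2n$. The hypothesis $\rho\le s/(2n^2)$ converts this into $2n\le |c^*|/(\rho n)$, and combining with $\rho\le\beta$ the total $a_i=a_i^*+a_i^{**}$ will satisfy
\begin{equation*}
|a_1|_1+|a_2|_2 \le \frac{|c|-|c^*|}{\beta n} + 2n \le \frac{|c|-|c^*|}{\rho n}+\frac{|c^*|}{\rho n}=\frac{|c|}{\rho n}.
\end{equation*}

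I do not anticipate a substantial obstacle here: both halves of $\rho=\min(s/(2n^2),\beta)$ will be used exactly once --- $\beta$ for the per-step weight reduction in the peeling stage, and $s/(2n^2)$ to absorb the $2n$ slack in the residue stage --- while the parameter $m$ plays no role, since only the trivial choice $A=B=[n]$ in the $(s,m,\beta)$-property is invoked. The only point requiring care is verifying that each peeled matrix really lies in $\cC^{(1)}$ or $\cC^{(2)}$ and that the running residue remains in $\cC_1\boxplus\cC_2$, both of which follow immediately from linearity.
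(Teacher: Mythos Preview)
Your proposal is correct and follows essentially the same approach as the paper. The paper phrases the argument as an induction on $|x|$ via the quantity $\phi(x):=\min\{|A_1|+|A_2|:x\in\cC_1\otimes\F_q^{A_2}+\F_q^{A_1}\otimes\cC_2\}$, while you unwind this into an explicit greedy peeling process; in both cases the key observations are that a non-$\beta n$-minimal codeword admits a single row/column correction reducing the weight by more than $\beta n$, and that any non-zero $\beta n$-minimal residue has weight at least $s$ (taking $A=B=[n]$), which is absorbed by the trivial $2n$ bound via $\rho\le s/(2n^2)$.
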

\begin{proof}
    For a codeword $x\in \cC_1\boxplus \cC_2$ we can define the value
    $$\phi(x):=\min\fbr{|A_1|+|A_2|\mid A_1,A_2\subseteq [n] : x\in \cC_1\otimes \F_q^{A_2}+\F_q^{A_1}\otimes \cC_2}.$$
    Let us prove by induction on the weight of $x$ that for all $x\in \cC_1\boxplus \cC_2$, $|x|<s$, that the following condition holds \begin{equation}\label{eqn:rho-exp}\phi(x)\le \frac{|x|}{n\rho}.\end{equation} 
    For a~non-zero codeword it is clear that $\eqref{eqn:rho-exp}$ holds. Consider a codeword $x\in \cC_1\boxplus \cC_2$, $|x|< s$. From $(s,m,\beta)$-product-expansion it follows that 
    the weight of $x$ can be decreased by more that $\beta n$ by adding either a column from $\cC_1$ or a row from $\cC_2$. 
    Without loss of generality we assume that it is the column $x(\cdot,i)$. Hence there exists $t\in \cC_1$ such that  $|x(\cdot,i)-t|<|x(\cdot,i)|-\beta n$. Now we define  
    $$x'(j,k)=\begin{cases}x(j,k),\mbox{ if }k\ne i,\\t_j,\mbox{ if }k=i.\end{cases}$$
    
    Hence we get that $x'\in \cC_1\boxplus \cC_2$, and $x'$ differs from $x$ only in the $i$-th column. Therefore we obtain  $|\phi(x)-\phi(x')|\le 1$.
    But since $|x'|<|x|$, by the induction hypothesis we get $\phi(x')\le\frac{|x'|}{n\rho}$. Hence we have 
    $$\phi(x)\le\phi(x')+1\le \frac{|x'|}{n\rho}+1\le \frac{|x|-\beta n+\rho n}{n\rho}\le \frac{|x|}{n\rho}.$$
    Now it remains to notice that by definition we have $\phi(x)\le 2n$, and therefore for $|x|\ge s$ we obtain  $\phi(x)\le 2n\le \frac{|x|}{n\cdot s/(2n^2)}\le\frac{|x|}{\rho n}$, i.e., condition~\eqref{eqn:rho-exp} holds and the pair of codes  $(\cC_1,\cC_2)$ is $\rho$-product-expanding.
\end{proof}

\section{Upper Bound on Product Expansion Factor}\label{app:upper}
\begin{proposition}\label{prop:upper}
If $(\cC_1,\cC_2)$ is $\rho$-product-expanding, where~$\cC_i$ is a~linear $[n, k\ge (1-\eps_i)n, d \ge \delta n]_q$ code. Then we have 
\begin{equation*}
    \rho\le \eps_1\eps_2+1/n.
\end{equation*}
\end{proposition}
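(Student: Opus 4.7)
The strategy is to exhibit a single codeword $c\in\cC_1\boxplus\cC_2$ realising the claimed ratio, in the spirit of the classical Singleton bound lifted to the product-code setting. The plan is to fix information sets $I_i\subseteq [n]$ for $\cC_i$ with complements $J_i:=[n]\setminus I_i$ of size $|J_i|=n-k_i\le\eps_i n$. The standard identification $\F_q^n=\cC_i\oplus \F_q^{J_i}$ tensors to an internal direct sum
\[
\F_q^n\otimes \F_q^n = (\cC_1\boxplus\cC_2)\oplus (\F_q^{J_1}\otimes \F_q^{J_2}),
\]
because a codeword of $\cC_1\boxplus\cC_2$ supported inside $J_1\times J_2$ vanishes on the complementary information block and is therefore zero, while the dimensions $n^2-(n-k_1)(n-k_2)$ and $(n-k_1)(n-k_2)$ already add up to $n^2$. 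In particular $[n]^2\setminus (J_1\times J_2)$ is an information set of $\cC_1\boxplus\cC_2$, so any assignment of values there extends uniquely to a codeword.

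I will first treat the regime $k_1+k_2\ge n$, in which $|J_1|\le |I_2|$ and one can choose a permutation $\sigma\in S_n$ with $\sigma(J_1)\subseteq I_2$; this guarantees that every matching position $(i,\sigma(i))$ lies \emph{inside} the information set of $\cC_1\boxplus\cC_2$. Let $c\in\cC_1\boxplus\cC_2$ be the unique codeword whose info values are $c(i,\sigma(i))=1$ for all $i\in [n]$ and zero at every other info position; the entries $c|_{J_1\times J_2}$ are then forced. Since $\supp(c)\subseteq \{(i,\sigma(i)):i\in[n]\}\cup (J_1\times J_2)$ we immediately obtain
\[
|c|\le n+(n-k_1)(n-k_2)\le n+\eps_1\eps_2 n^2.
\]

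Next I would lower-bound $\phi(c)$. For any decomposition $c=a_1+a_2$ with $a_i\in\cC^{(i)}$ and any matching position $(i,\sigma(i))$, the non-zero value $c(i,\sigma(i))=1$ forces either $a_1(\cdot,\sigma(i))\ne 0$ (contributing $\sigma(i)$ to $|a_1|_1$) or $a_2(i,\cdot)\ne 0$ (contributing $i$ to $|a_2|_2$). Since $\{i\}_{i\in [n]}$ and $\{\sigma(i)\}_{i\in [n]}$ are both enumerations of $[n]$, a vertex-cover lower bound for a perfect matching gives $|a_1|_1+|a_2|_2\ge n$, hence $\phi(c)\ge n$. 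Combining,
\[
\rho \;\le\; \frac{|c|}{n\,\phi(c)}\;\le\; \frac{n+\eps_1\eps_2 n^2}{n^2} \;=\; \eps_1\eps_2+\frac{1}{n}.
\]

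The step I expect to be the main obstacle is the complementary regime $k_1+k_2<n$: there, no permutation can keep $\{(i,\sigma(i))\}$ entirely inside the information set of $\cC_1\boxplus\cC_2$, and at each of the $\ge n-k_1-k_2$ unavoidable positions $(i,\sigma(i))\in J_1\times J_2$ the value of $c$ is \emph{forced} and could a~priori vanish, destroying the matching-cover lower bound. The remedy I would pursue is to replace the constant info values $c(i,\sigma(i))=1$ by generic scalars $x_i\in\F_q^\T$ and use a Schwartz--Zippel / non-degeneracy argument on the linear forms expressing $c|_{J_1\times J_2}$ in terms of the $x_i$ to ensure that the codeword still has a perfect matching inside its support; once this is done, the weight bound $|c|\le n+(n-k_1)(n-k_2)$ and the inequality $\phi(c)\ge n$ both propagate without change, and the stated bound $\rho\le \eps_1\eps_2+1/n$ follows uniformly.
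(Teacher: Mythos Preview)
Your construction and the paper's are built on the same idea: manufacture a codeword of $\cC_1\boxplus\cC_2$ that is a ``diagonal'' on an information block plus uncontrolled junk confined to the complementary rectangle, then lower-bound the number of rows and columns in any decomposition by a matching/vertex-cover argument on the diagonal. The difference is the length of the diagonal. You plant a full length-$n$ permutation matching, which makes $\phi(c)\ge n$ immediate and gives the sharp inequality in one line. The paper instead (taking $\eps_1\le\eps_2$) plants a diagonal of length only $|A_1|=k_1$, kept entirely inside the information region $A_1\times A'_2$; this yields only $\phi(y)\ge k_1$ naively, and the missing factor is recovered by a trade-off: any decomposition must also cover the junk in $\bar A_1\times\bar A_2$, so the extra rows/columns this forces (the parameter $\alpha$) are balanced against the weight of the junk they can possibly account for, and maximizing over $\alpha$ gives exactly $\eps_1\eps_2$. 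In the regime $k_1+k_2\ge n$ your argument is correct and cleaner than the paper's, since it bypasses that optimization entirely.

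The proposed remedy for $k_1+k_2<n$, however, does not go through. You need each linear form $L_m(x)=c(m,\sigma(m))$ to be not identically zero, and this can genuinely fail. Take $n=3$ over $\F_2$, $\cC_1=\langle(1,1,0)\rangle$, $\cC_2=\langle(1,0,1)\rangle$, $I_1=I_2=\{1\}$. Every codeword of $\cC_1\boxplus\cC_2$ has $c(3,2)=0$ (the third coordinate of $\cC_1$ and the second of $\cC_2$ are both identically zero), and a short case-check over all six permutations shows that for each $\sigma$ at least one matching position in $J_1\times J_2$ is forced to $0$ regardless of the free scalars $x_i$. So no perfect matching survives inside $\supp c$, and Schwartz--Zippel has nothing to apply to. Even when the $L_m$ \emph{are} nondegenerate, over small $q$ with $|M|\ge q$ there need not exist a common non-vanishing point. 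The paper's shorter diagonal sidesteps both issues---at the price of the $\alpha$ analysis---because it never leaves the information region and hence never relies on forced entries being nonzero.
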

\begin{proof}
    \begin{figure}\centering
    \begin{tikzpicture}
    \begin{scope}
    \draw[red,dashed] (0,1) -- (2,1) -- (2,3);
    \draw[thick, red] (0,3) -- (2,1);
    \draw[] (0,0) rectangle (3,3);
    \draw[black,dashed] (2,0) -- (2,1);
    \draw[red,->] (1,1) -- (1,0.5);
    \draw [decorate,decoration={brace,amplitude=8pt}] (0,1) -- (0,3) node [midway,left,xshift=-0.8em] {$A_1$};
    \draw [decorate,decoration={brace,amplitude=8pt}] (0,3) -- (2,3) node [midway,above,yshift=0.8em] {$A'_2$};
    \end{scope}
    \begin{scope}[shift={(5,0)}]
    \node[above] at (1.5,3) {$x$};
    \draw[->,>=stealth] (-1.5,1.5) -- node[above]{encode $\cC_1$}(-0.5,1.5);
    \draw[thick] (0,3) -- (2,1);
    \fill[gray!30] (0,0) rectangle (2,1); 
    \fill[red!30] (0,0) rectangle (1.5,1);
    \draw[red,dashed] (0,1) -- (1.5,1) -- (1.5,0);
    \draw[black,dashed] (1.5,1) -- (3,1);
    \draw[red, ->] (1.5,0.5) -- (2.2,0.5);
    \draw[] (0,0) rectangle (3,3);
    \draw [decorate,decoration={brace,amplitude=8pt}] (1.5,0) -- (0,0) node [midway,below,yshift=-0.8em] {$A_2$};
    \end{scope}
    \begin{scope}[shift={(10,0)}]
    \node[above] at (1.5,3) {$y$};
    \draw[->,>=stealth] (-1.5,1.5) -- node[above]{encode $\cC_2$}(-0.5,1.5);
    \fill[gray!30] (1.5,0) rectangle (3,1); 
    \draw (0,0) rectangle (3,3);
    \draw[thick] (0,3) -- (2,1);
    \draw [decorate,decoration={brace,amplitude=8pt}] (3,0) -- (1.5,0) node [midway,below,yshift=-0.8em] {$\eps_2 n$};
    \draw [decorate,decoration={brace,amplitude=8pt}] (1.5,0) -- (0,0) node [midway,below,yshift=-0.8em] {$A_2$};
    \draw [decorate,decoration={brace,amplitude=8pt}] (3,1) -- (3,0) node [midway,right,xshift=0.8em] {$\eps_1 n$};
    \draw [decorate,decoration={brace,amplitude=8pt}] (3,3) -- (3,1) node [midway,right,xshift=0.8em] {$A_1$};
    \end{scope}
    \end{tikzpicture}
    \caption{Codeword construction for Proposition~\ref{prop:upper}. We start with a~diagonal matrix on the subset $A_1\times A'_2$ where $A'_2$ contains an~information set $A_2$ of $\cC_2$ (the left picture). First, we encode it by columns and obtain $x\in\cC_1\otimes \F_q^n$ (middle picture). Then we encode part $x([n]\setminus A_1,A_2)$ by rows, subtract the result from the word $x$, and obtain $y\in \cC_1\boxplus \cC_2$ (the right picture). }\label{fig:prop-proof}
    \end{figure}
    Suppose the pair $(\cC_1,\cC_2)$ is $\rho$-product-expanding.    Let $A_i$ be an information set of the $\cC_i$, $i\in [2]$. Without loss of generality we can assume that $\eps_1\le\eps_2$, i.e. $|A_1|\ge |A_2|$. 
    Let $A'_2$ be a set of size $|A_1|$ that contains $A_2$. For a set $A$ by $\bar A$ we denote the set $[n]\setminus A$. 
    Consider word $y\in C_1\boxplus C_2$ obtained as shown in figure \ref{fig:prop-proof}. By construction we have $|y|\le \eps_1\eps_2n^2+n$.
    Suppose $y=y_1+y_2$ where $y_1\in \cC_1\otimes \F_q^{B_2}$, $y_2\in \F_q^{B_1}\otimes \cC_2$ such that $|y|\ge \rho n(|B_1|+|B_2|)$. Let us estimate $|B_1|+|B_2|$. It is easy to see that to cover diagonal part $y(A_1,A'_2)$ we need at least $|A_1|$ rows and columns, hence $|B_1\cap A_1|+|B_2\cap A'_2|\ge |A_1|=(1-\eps_1)n$.
    
    Let $\alpha:=\frac1n(|B_1\setminus A_1|+|B_2\setminus A'_2|)$. Then $\alpha\le \frac1n(|\bar A_1|+|\bar A'_2|)=2\eps_1$, $|y|= |A_1|+|y(\bar A_1,\bar A_2)|\le n+\min(\alpha n\cdot \eps_2 n,\eps_1\eps_2 n^2)$. Then we have 
    \[
        \rho\le \frac{|y|}{n(|B_1|+|B_2|)}\le \frac{n+\eps_2 \min(\alpha,\eps_1) n^2}{n(n-\eps_1 n+\alpha n)}\le \frac{1}{n}+\overbrace{\frac{\eps_2 \min(\alpha,\eps_1)}{1-\eps_1+\alpha}}^{\mbox{maximal when }\alpha=\eps_1}\le\frac1n +\eps_1\eps_2.\qedhere
    \]
\end{proof}

From the proof it is easy to see that this bound is not tight except for the degenerate case $\eps_1\eps_2=0$ since we used not very accurate bounds for the weight of the diagonal part of the matrix. Moreover, for large fields, we can use different field elements on the diagonal to make some additional elements in $y(\bar A_1,\bar A_2)$ equal to zero. However, it is not possible to obtain in this way bounds lower than $\eps_1\eps_2$.

\end{document}